\newtheorem{theorem}{Theorem}
\newtheorem{lemma}{Lemma}
\newtheorem{proof}{Proof}[section]
\title{Shuffle-Exchange Brings Faster: Reduce the Idle Time During Communication for Decentralized Neural Network Training}
\author{%
  Xiang~Yang * , Qi~Qi * , Jingyu~Wang * , Haifeng Sun * , Yujian Li * , Temp \dag \\
  * Beijing University of Posts and Telecommunications\\
  \texttt{\{yangxiang, qiqi8266, wangjingyu, hfsun, beiyouliyujian\}@bupt.edu.cn} \\
   \dag Singapore University of Technology and Design \\
  \texttt{tonyquek@sutd.edu.sg}
}
\begin{document}

\maketitle

\begin{abstract}
As a crucial scheme to accelerate the deep neural network (DNN) training, distributed stochastic gradient descent (DSGD) is widely adopted in many real-world applications. In most distributed deep learning (DL) frameworks, DSGD is implemented with Ring-AllReduce architecture (Ring-SGD) and uses a computation-communication overlap strategy to address the overhead of the massive communications required by DSGD. However, we observe that although $O(1)$ gradients are needed to be communicated per worker in Ring-SGD, the $O(n)$ handshakes required by Ring-SGD limits its usage when training with many workers or in high latency network. In this paper, we propose Shuffle-Exchange SGD (SESGD) to solve the dilemma of Ring-SGD. 
 In the cluster of 16 workers with 0.1ms Ethernet latency, SESGD can accelerate the DNN training to $1.7 \times$ without losing model accuracy. Moreover, the process can be accelerated up to $5\times$ in high latency networks (5ms).
\end{abstract}

\section{Introduction}

With the expansion of data and model scale, distributed stochastic gradient descent (DSGD) is widely adopted to accelerate the training of deep neural networks (DNNs). In DSGD, all parallel workers iteratively compute gradients with their local datasets and synchronize the gradients with others in every iteration to refine the global model parameters. In practice, most popular distributed deep learning (DL) frameworks implement DSGD with the Ring-AllReduce architecture and computation-communication overlap strategy to address the overhead of the massive communications in DSGD (e.g., TensorFlow \cite{tensorflow},  PyTorch\cite{pytorch}, Horovod \cite{sergeev2018horovod}).

Previous efforts reduce the impact of communication constraint in DSGD from two aspects. From the aspect of algorithm, Gradient Quantization \cite{quantization2,quantization3} and Sparsification \cite{sparse2,sparse3,dgc} reduce the communication time by cutting down the size of gradients in communication. 
These techniques need to balance the trade-off between the model accuracy and the size of gradients.
From the aspect of framework design, \cite{zhang2017poseidon} proposes a novel architecture, in which gradient communication is overlapped with backward computation layer by layer to reduce the execution time of these two operations. \cite{li2018pipe,shen2019faster,wang2019scalable} extend the overlap strategy by sending more layers or reducing communication frequency. 
These frameworks focus on overlapping computation and computation to achieve better throughput.

However, from a more fine-grained perspective for the communication cost, although $O(1)$ gradients are needed to be communicated per worker for DSGD with Ring-AllReduce architecture (Ring-SGD), $O(n)$ handshakes are needed when training with $n$ workers. In a cluster with bandwidth $\nu$ and network latency $t_\tau$, the communication overhead of Ring-SGD follows the formula on the basis of analysis in Section \ref{sec_allreduce}:

\begin{equation} 
T_{layer} \approx \frac{2\mathcal{G}}{\nu}+2nt_{\tau}
\end{equation}

where $T_{layer}$ is the communication overhead to synchronize gradients for a layer among all parallel workers, and $\mathcal{G}$ is transferred gradients.

It is expensive to deploy a training cluster with high bandwidth and low latency (e.g., Nvlink, InfiniBand). And for some specific applications high latency is unavoidable  \cite{45648}. When training DNN with many workers or in networks with high latency, the idle time $2nt_{\tau}$ caused by the frequent handshakes of Ring-SGD sharply increases and dominates the communication, which slows the training process and makes previous efforts less effective.  

In this paper, we propose and design \emph{Shuffle-Exchange SGD} (SESGD), a novel decentralized communication approach to solve $O(n)$ handshakes of Ring-SGD. In SESGD, all the workers are divided into different groups. In the beginning of every iteration, workers will be re-assigned to different groups through the Shuffle-Exchange operation. Instead of synchronizing the gradients among all workers, SESGD only synchronizes gradients within groups. Workers in different groups do not communicate in a single iteration, but different workers may be shuffled into the same group in the following iterations. 

In a nutshell, we make the following contributions:

\begin{enumerate}
	\item We observe the idle time caused by frequent handshakes during communication also makes distributed training inefficient.
	\item We propose SESGD, a novel decentralized communication approach to reduce the idle time during communication and guarantee the convergence.
	\item When training deep neural networks, SESGD achieves a better time speedup comparing with existing distributed algorithms.
\end{enumerate}

We implemented SESGD and evaluated our method in various datasets. Our experiments show that, in the cluster of 16 workers with 0.1ms Ethernet latency, SESGD can accelerate the entire training process to $1.7 \times$ without losing model performance. In high latency network, the process can be accelerated up to $5\times$.

\section{Background And Motivation}
In this section, we briefly introduce the basic concepts of distributed training.
Then we analyze the overhead of frequent handshakes caused by Ring-SGD.

\subsection{Computation and Communication Overlap}

In traditional DSGD, each worker conducts computation and communication operation sequentially in one iteration, as shown in Figure \ref{traditional}. During this progress, the GPUs are unused during communication. Since the computation of gradients is performed through DNN layer by layer, previous efforts propose to overlap the computation and communication by sending gradients layer by layer,
as shown in Figure \ref{overlaped_training}. This technique improves the throughput in DSGD, and is adopted and implemented as the default choice by most popular distributed DL frameworks
(e.g., TensorFlow, PyTorch, MXNet \cite{mxnet}).

\subsection{Decentralized Ring-AllReduce Architecture} \label{sec_allreduce}


\begin{figure}[tb]
	\centering
	\subfigure[training with traditional strategy] { \label{traditional}
		\centering
		\includegraphics[width=0.35\linewidth]{./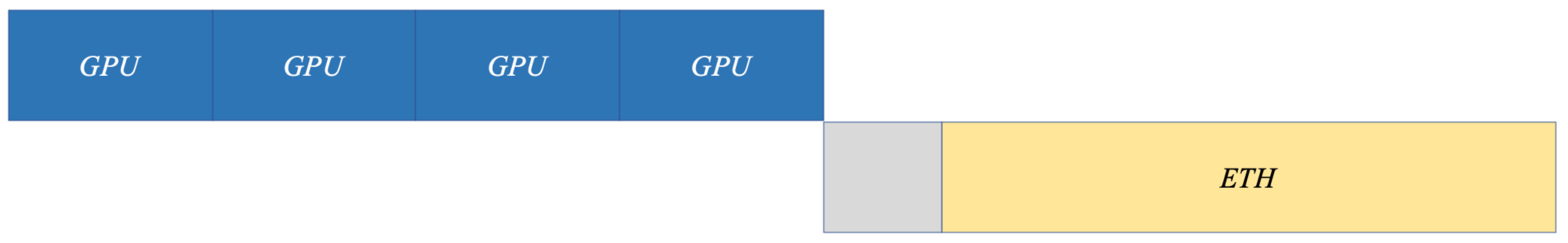}
	}
	\subfigure[training with overlap strategy] { \label{overlaped_training}
		\centering
		\includegraphics[width=0.3\linewidth]{./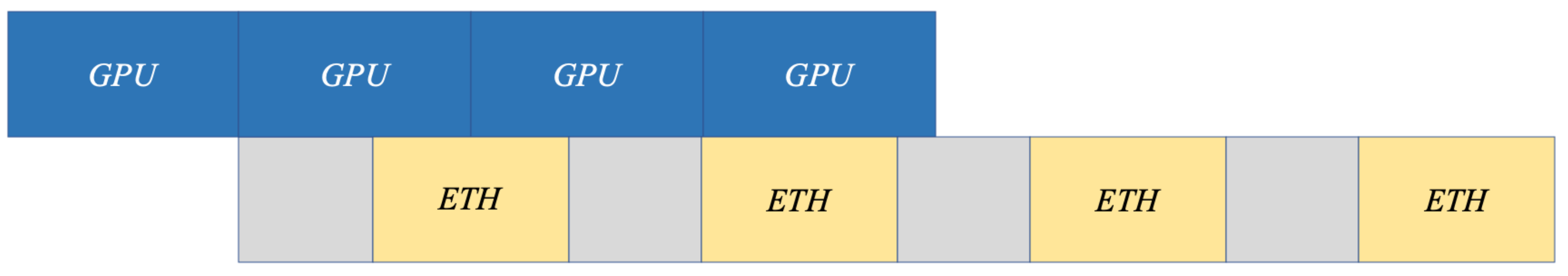}
	}
	\subfigure[training with SESGD]{
		\centering
		\includegraphics[width=0.25\linewidth]{./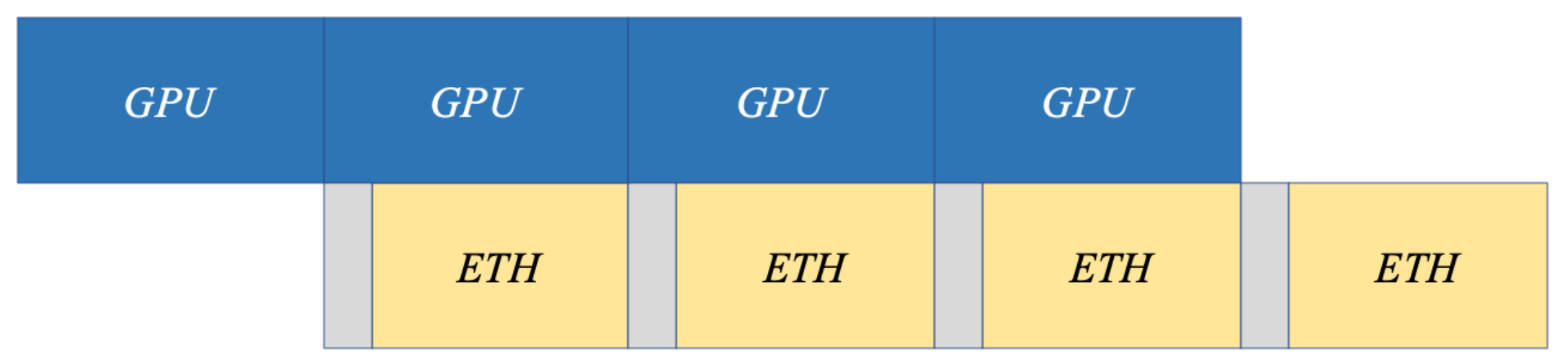}
	}
	\caption{Training with different strategies. A blue block indicates computation time in one layer,
		which uses the GPU resources. A yellow block corresponds to gradient transmission time in communication, which uses the bandwidth resources. And the grey block presents the idle time during communication, which harms the bandwidth utilization.}
	\label{fg_overlap}
\end{figure}

\begin{figure}[ht]
	\centering
	\begin{minipage}{0.3\linewidth}
		\centering
		\subfigure[Centralized communication] { \label{fg_ps}
			\centering
			\includegraphics[width=\linewidth]{./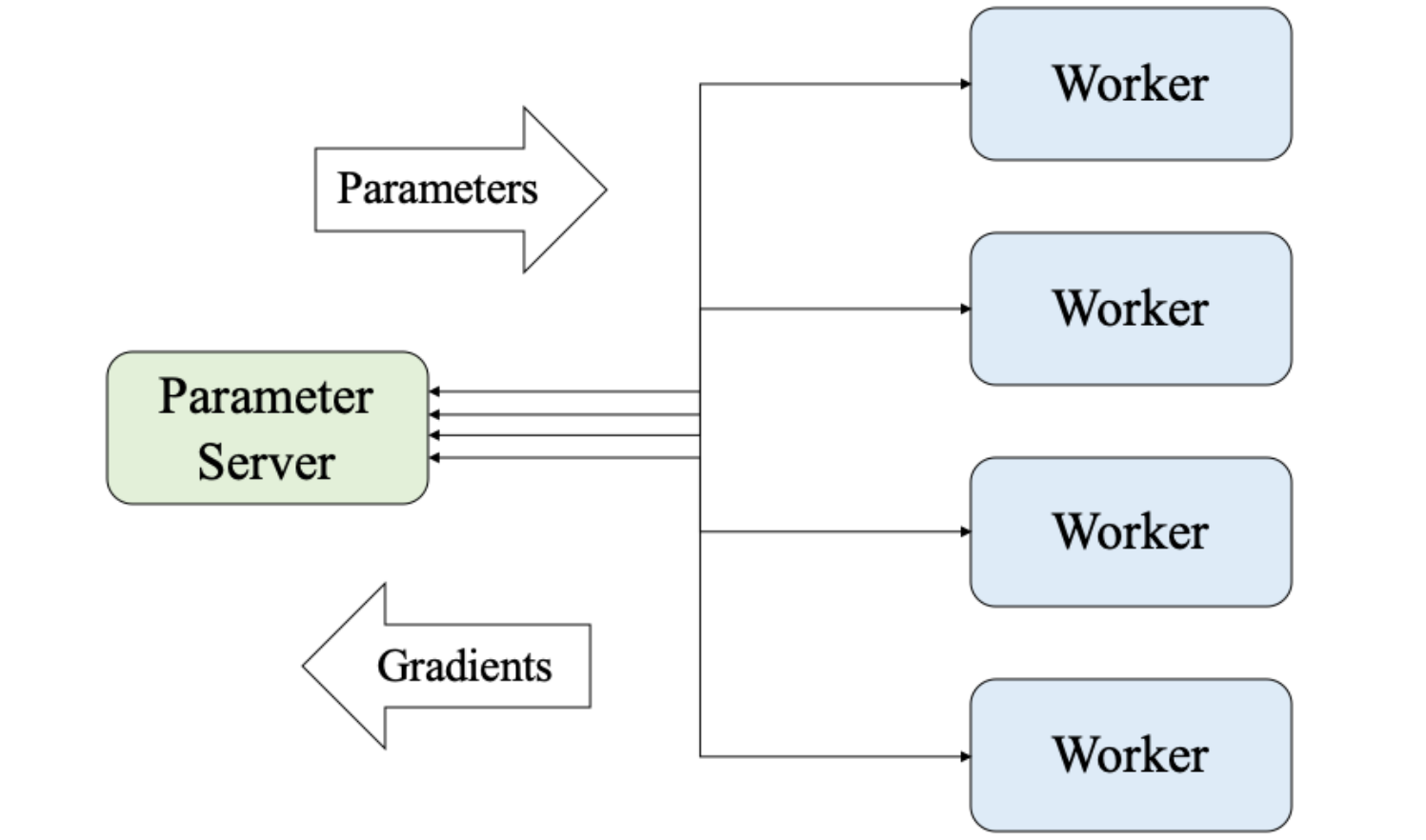}
		}
		
		\subfigure[Decentralized communication] { \label{fg_ring}
			\centering
			\includegraphics[width=\linewidth]{./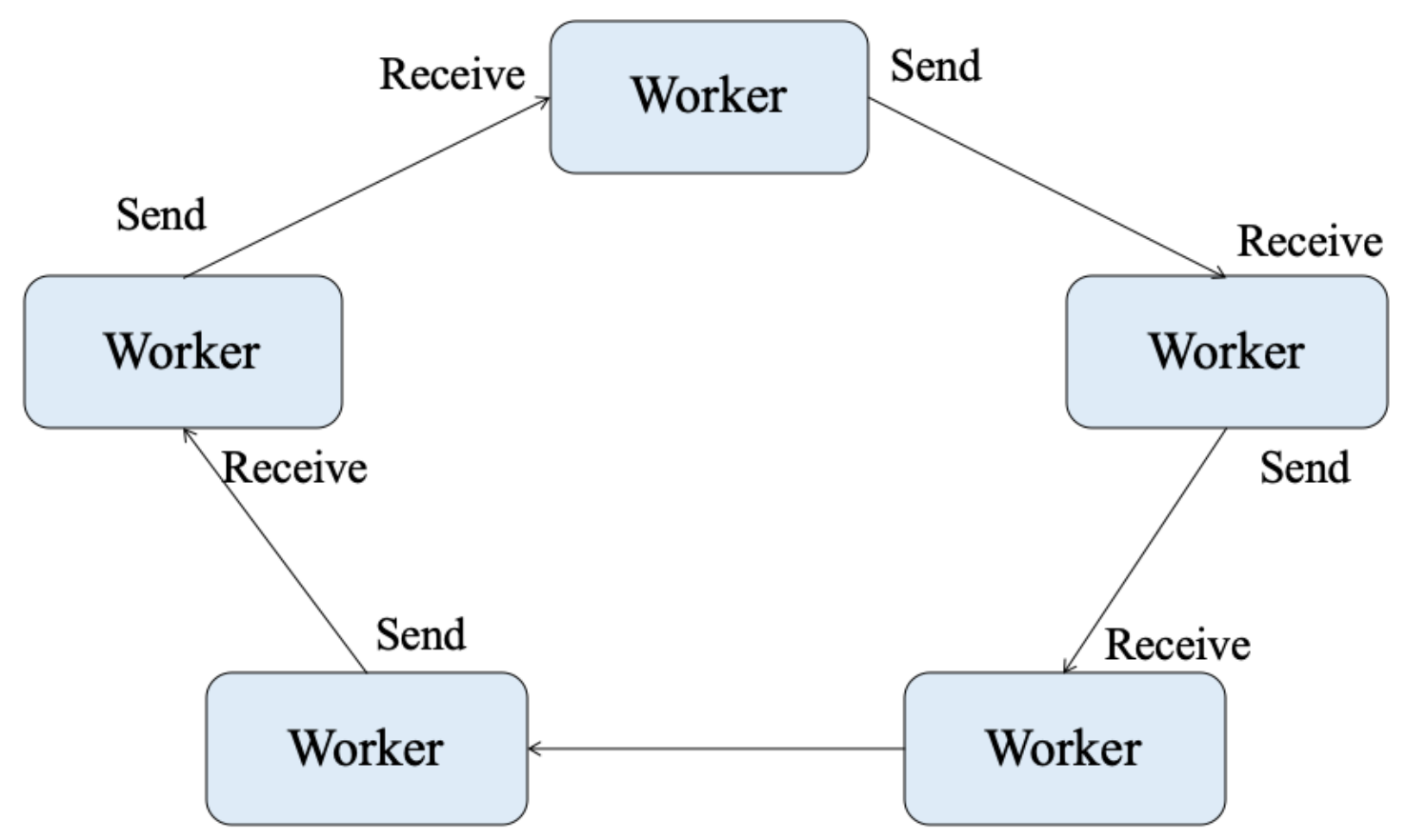}
		}
	\end{minipage}
	\begin{minipage}{0.65\linewidth}
	\subfigure[The schematic diagram of Ring-AllReduce]{ \label{fg_allreduce}
		\centering
		\includegraphics[width=\linewidth]{./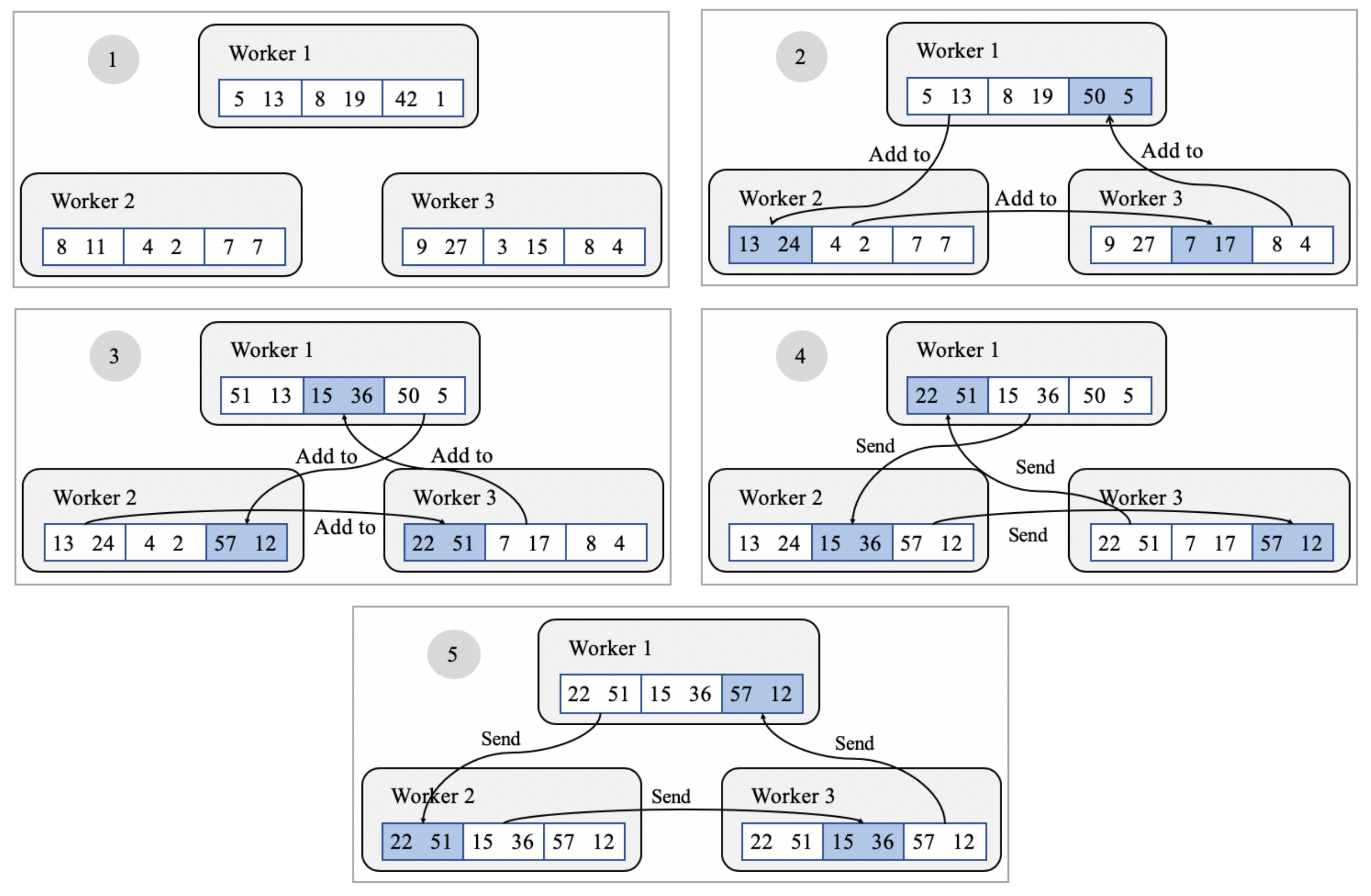}
	}
	\end{minipage}

	\caption{Illustrations of different communication architectures.}
\end{figure}


Most distributed DL frameworks employ the Ring-AllReduce architecture to train DL models iteratively. In this architecture, workers are arranged in a logical ring, for one communication with  as shown in Figure \ref{fg_ring}. Communication under Ring-AllReduce has two steps: Scatter-Reduce and All-Gather. Each requires $n-1$ handshakes. Each worker divides its computed gradient into $n$ slices. In one handshake, each worker sends one slice to its right neighbor while receives slice from its left. For one communication with 3 workers, it needs $2*(3-1)=4$ handshakes to get the sum of gradients, as shown in Figure \ref{fg_allreduce}.  The communication overhead can be expressed quantitatively as:

\begin{equation} \label{eq_allreduce}
T_{layer} = 2(n-1)(\frac{\mathcal{G}}{n\nu}+t_{\tau})
 \approx \frac{2\mathcal{G}}{\nu}+2nt_{\tau}
\end{equation}

Compared with the traditional Parameter Server architecture \cite{ps_arch} in Figure \ref{fg_ps}, where the parameter server needs to average $O(n)$ gradients and send the parameters to all workers, Ring-AllReduce as a decentralized architecture can guarantee $O(1)$ gradient transmission for every worker. The Parameter Server architecture holds the sending and receiving operations as two time steps during communication. Meanwhile, workers in Ring-AllReduce architecture can send and receive gradients at the same time, taking full use of the advantage of the full-duplex link. However, the $O(n)$ handshakes remains a problem.

\subsection{Overhead of Frequent Handshakes}

Most previous efforts focus on cutting down the size of the gradients or improving the overlaps between computation and communication, but pay little attention to the idle time of bandwidth during communication. Ring-SGD requires $O(n)$ handshakes to synchronize one layer. Therefore, the communication cost grows as the number of workers increase. When training with a large number of devices or the network latency is high, the $O(n)$ handshakes required by Ring-AllReduce dominates the communication. At the same time, the large number of small gradient pieces split by each worker will also greatly harm bandwidth utilization, make the situation even worse, as shown in Figure \ref{fg_band}.

We evaluate four different models with different number of workers, and measure the proportion of idle time during communication. The result is shown in Table \ref{tb_comm_time}. The idle time makes Ring-SGD with overlap strategy cannot be effectively applied with a large number of devices. For example, when training DNN with 50 layers on 16 workers, $50*2*(16-1) = 1500$ handshakes establish in one iteration. Previous efforts show that, with the same size, the deeper model with smaller layers often performs better \cite{szegedy2015going,szegedy2016rethinking}. The idle time during communication may be a bottleneck for most real-world applications.

\begin{figure}
	\centering
	\begin{minipage}{0.6\linewidth}
			\centering
			\begin{tabular}{lrrrr}
				\toprule
				Model       & Worker & Idle Time(s) & Proportion \\
				\midrule
				\multirow{2}{*}{ResNet18} 
				 & 4     & 0.384      & 30.27\%  \\
				& 16    & 1.326      & 61.02\%  \\
				\midrule
				\multirow{2}{*}{ResNet50} 
				    & 4     & 0.611      & 38.54\%  \\
				    & 16    & 2.910      & 69.02\%  \\
				\midrule
				\multirow{2}{*}{DenseNet121} 
				 & 4     & 1.244      & 57.66\%  \\
				 & 16    & 6.568      & 87.72\%  \\
				\midrule
				\multirow{2}{*}{VGG16} 
				       & 4     & 0.547      & 12.21\%  \\
				       & 16    & 1.722      & 38.44\%  \\
				\bottomrule
			\end{tabular}
			\captionof{table}{The idle time caused by $O(n)$ handshakes quickly grows for various DNNs with 0.1ms latency as the number of workers increases.}
			\label{tb_comm_time}
	\end{minipage}
	\begin{minipage}{0.3\linewidth}
	\centering
	\subfigure [Training with Ring-SGD]{
		\centering
		\includegraphics[width=\linewidth]{./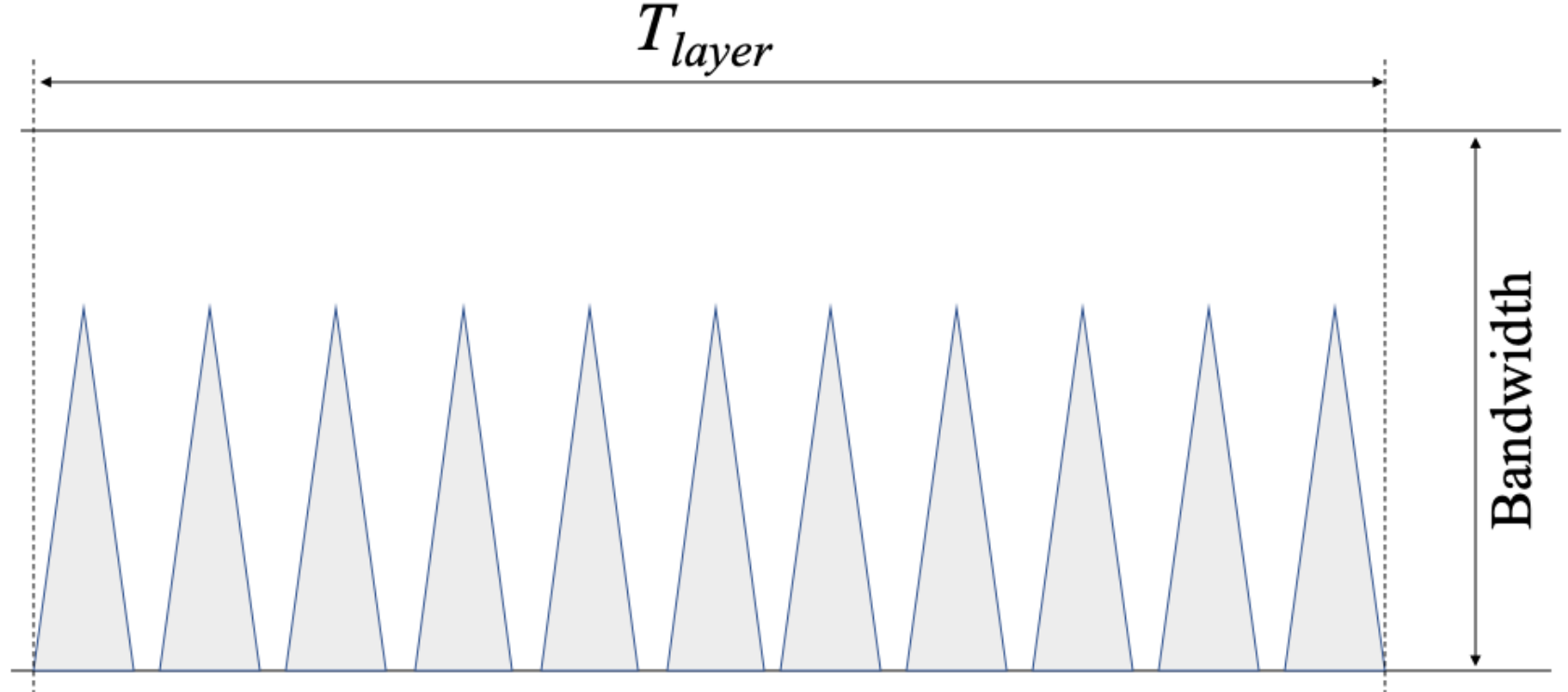}
	}
	\centering
	\subfigure [Training with SESGD] {
		\centering
		\includegraphics[width=\linewidth]{./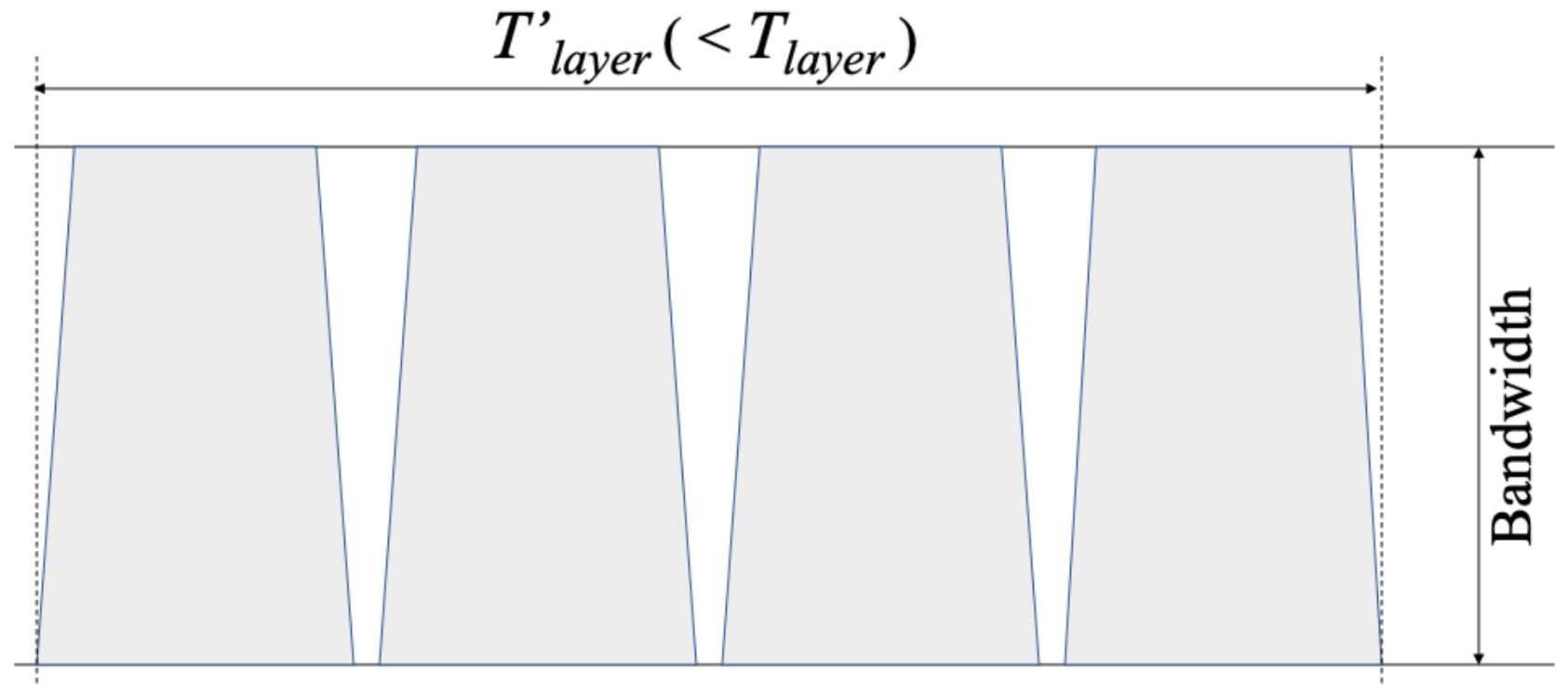}
	}
	\caption{Network traffic for different number of workers}
	\label{fg_band}
\end{minipage}
\end{figure}

\section{Shuffle-Exchange SGD}
In this section, we present the \emph{Shuffle-Exchange SGD} (SESGD). It greatly reduces the number of
handshakes while shares the same convergence performance with Ring-SGD.

\subsection{Shuffle-Exchange}

The number of handshakes can be greatly reduced if we cut down the number of workers. Intuitively, we divide all workers into
several groups and one worker only communicates with those in the same group. After the training, we average the global model
parameters among all groups.

However, in the case of DSGD, since different workers owns different datasets,
the model parameters of each worker will be slightly different after each iteration.
We need to ensure that every worker can communicate \emph{directly or indirectly} with all the other workers during the training process. 
\cite{d2_reduce1, localsgd} propose a hierarchical ring scheme for distributed training with massive nodes, which greatly reduces the number of handshakes. But this scheme also multiplies communication traffic in the same time as the additional allreduce and broadcast operations have to be introduced.

To solve this problem, after each iteration, we randomly shuffle these workers into different groups, 
which can be described in Figure \ref{fg_mr_reduce}.
For further explanation, imagine a DNN training with workers $\{0, 1, 2, 3\}$.
We divide them into two groups. In iteration $t$, workers in the two groups are $\{0, 1\}, \{2, 3\}$, but for the iteration $t'$, it may be $\{0, 2\}, \{1, 3\}$. 
During this operation, workers only need to change its sender and receiver, and the overhead of Shuffle-Exchange operation can be ignored. If we divide $n$ workers into $\sqrt{n}$ groups, the communication time of SESGD can be expressed as:
\begin{equation} \label{eq_sesgd}
T_{layer}  \approx \frac{2\mathcal{G}}{\nu}+2\sqrt{n}t_{\tau}
\end{equation}
The $O(n)$ handshakes are reduced to $O(\sqrt{n})$ comparing with Ring-SGD.
In our implementation, we use the pseudo-random algorithm to generate the grouping information and set the same random seed on every worker
to avoid extra message exchange and maintain the decentralized architecture.


\subsection{Gradient Correction}
Most distributed DL framework implements DSGD with the following formula:
\begin{equation}
\nabla_{t} = \frac{1}{n}\sum_{i=0}^{n-1}\nabla f_i \left(x_{t,i}; \xi_i \right) , \quad
x_{i, t+1}=x_{i,t}-\eta \nabla_{t} \label{norm_dsgd_1}
\end{equation}
where $\xi_i$ is the training sample from the local dataset $\mathcal{D}_i$ owned by worker $i$. $x_{i, t}$ is the local model parameters of worker $i$ in iteration $t$, and $\nabla_{t}$ denotes the global gradient.

However, the shuffle operation cannot be directly applied to \eqref{norm_dsgd_1}, since it ignores the gradient error among different groups. Let $g(i,t)$ be the group seeking function which return the group that worker $i$ belongs to in iteration $t$ and $\nabla_{g(i,t)}$ denotes the averaged gradient in group $g(i,t)$.
If the shuffle operation is directly applied to \eqref{norm_dsgd_1}, the gradient error in iteration $t$ will be abandoned once the next iteration begins, as shown in Figure \ref{fg_correction}.
Note all workers start with the same $x_0$, after $\tau$ iterations the model parameters in worker $i$ and worker $j$ will diverge and then harms the performance:
\begin{equation}
\mathbb{E}_{\xi \sim \mathcal{D}} \| x_{i,\tau} - x_{j,\tau} \| = \mathbb{E}_{\xi \sim \mathcal{D}}  \left\| \sum_{t \in [0, \dots, \tau]} [\nabla_{g(i,t)} - \nabla_{g(j,t)}] \right\|
\end{equation}

But if we regard the model parameters $x_{t}$ as ``gradient", then we can accumulate those gradient errors in local workers. Therefore, 
we can give the update formula of SESGD:
\begin{equation}
\hat{x}_{i,t} = x_{i,t} - \eta \nabla f\left(x_{t,i};\xi_{i}\right), \quad
x_{i,t+1} = \frac{k}{n} \sum_{j \sim g_{i, t}} \hat{x}_{j,t} \label{eq_mr_update_1}
\end{equation}
where ${x}_{i,t}$ is the model parameters in worker $i$ and $\hat{x}_{i, t}$ denotes the parameters that updated but not synchronized in worker $i$. 
SESGD can be described as Algorithm \ref{sesgd_algo}.

\begin{figure}[t]
\begin{minipage}{0.5\linewidth}
		\begin{algorithm}[H] \small
			\caption{Shuffle-Exchange SGD on worker $i$} 
			\label{sesgd_algo}
			\begin{algorithmic}[1] 
				\renewcommand{\algorithmicrequire}{\textbf{Input:}}
				\renewcommand{\algorithmicensure}{\textbf{Output:}}
				\REQUIRE Local dataset $\mathcal{D}_i$
				\REQUIRE Initialized parameters $x_{i,0}$
				\REQUIRE The number of iterations $T$
				\REQUIRE Minibatch size $b$
				\REQUIRE The number of workers $n$
				\REQUIRE The number of groups $k$
				\REQUIRE Random seed $\varsigma$
				\STATE Initialize the pseudo-random algorithm with $\varsigma$
				\FOR {$t = 0, 1, \dots, T-1$}
				\STATE $\widehat{x}_{i,t} \gets x_{i,t}$
					\FOR {$b=0, 1, \dots, b-1$}
						\STATE Sample data $\xi_i$ from local dataset $\mathcal{D}_i$
						\STATE Compute the gradients $\nabla f\left(x_{i,t}; \xi_i\right)$
						\STATE $\widehat{x}_{i,t} \gets \widehat{x}_{i,t} - \eta \frac{1}{b} \nabla f\left(x_{i,t}; \xi_i\right)$
					\ENDFOR

				\STATE Randomly generate new groups depending on $\varsigma$
				\STATE Enter a new group $G_{i, t}$
				\STATE $x_{i,t+1} \gets \text{Ring-AllReduce}\left(\widehat{x}_{i,t}; G_{i, t} \right)$
				\ENDFOR
				\STATE $\overline{x} \gets \text{Ring-AllReduce}\left(x_{i}  ;Global \right)$
			\end{algorithmic}
		\end{algorithm}
\end{minipage}
\begin{minipage}{0.45\linewidth}
		\centering
		~\\
			~\\
				~\\
					
		\subfigure[Shuffle-Exchange]{ \label{fg_mr_reduce}
			\centering
			\includegraphics[width=\linewidth]{./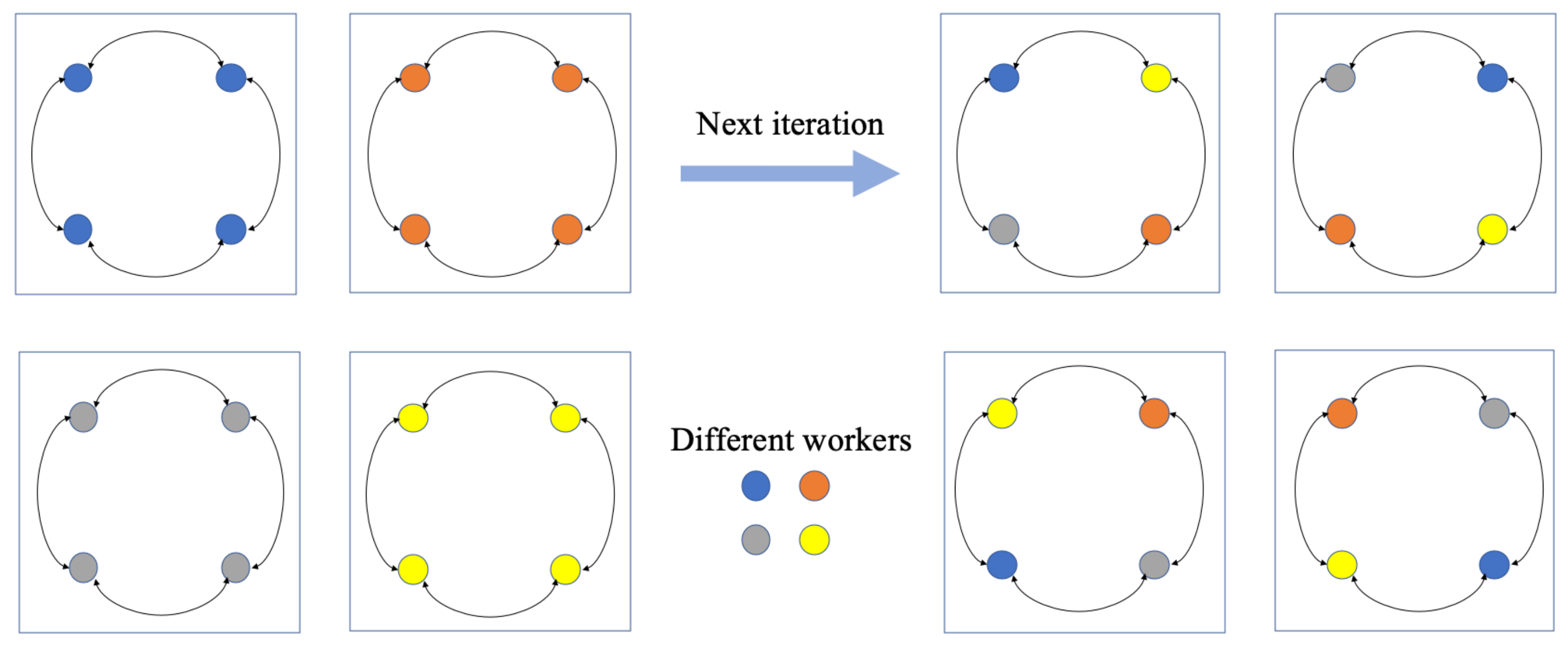}
		}
	
		\subfigure[Gradient Correction]{ \label{fg_correction}
			\centering
			\includegraphics[width=\linewidth]{./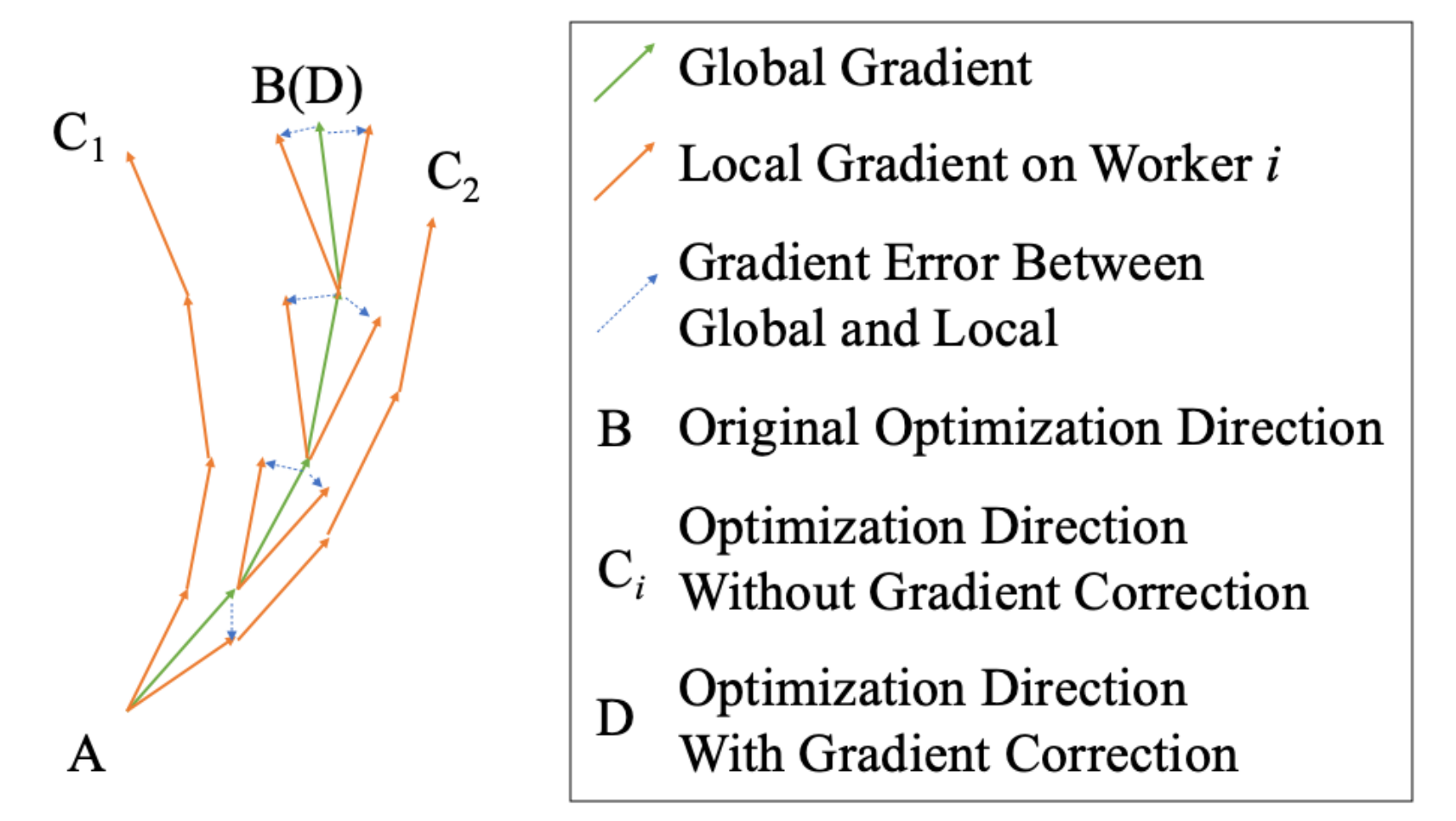}
		}
		\centering
		\captionof{figure}{Shuffle-Exchange SGD.}
\end{minipage}
\end{figure}


\subsection{Convergence Guarantee} \label{proof}
Here we provide convergence results in the smooth, non-convex functions. We make the following assumptions:
\begin{enumerate}
	\item (Unbiased) The stochastic gradients are unbiased estimators of the true gradient of the function $f$: $\mathbb{E}_{\xi \sim \mathcal{D}}[\nabla f(x)] = \nabla f(x)$. \label{unbiased}
	\item ($L$-smooth) There exists a constant $L > 0$ such that $\left\|\nabla f({x})-\nabla f({y})\right\| \leq L\|{x}-{y}\|$. \label{l_smooth}
	\item (Bounded gradient) It is standard to assume that the stochastic gradients over the sample space is bounded: $\mathbb{E}_{\xi \sim \mathcal{D}}\left\|\nabla f(x)\right\| \le M$. \label{bouned}
\end{enumerate}

For simplicity, we use $\bar{x}_t$ to denote $\frac{1}{n} \sum_{i=0}^{n-1} x_{i,t}$ and $\mathbb{E}[\cdot]$ for $\mathbb{E}_{\xi \sim \mathcal{D}} [\cdot]$. \cite{fixed_graph} define that a decentralized algorithm converges if for any $\epsilon > 0$, it eventually satisfies:
$\frac{\sum_{t=0}^{T-1} \mathbb{E}[\|\nabla f(\bar{x}_t)\|^2]}{T} \leq \epsilon$,
where $T$ is the number of iterations. We show that SESGD converges in this sense. Due to the page limit, we defer the derivation to supplement materials. 

\begin{lemma}
Suppose that all the assumptions holds and $k < n$. We have
\begin{equation}
\frac{\sum_{t=0}^{T-1} \mathbb{E}[\|\nabla f(\bar{x}_t)\|^2]}{T} \leq \frac{2(f(\bar{x}_0) - f^\star)}{\eta T} +  \frac{4\eta^2 L^2 M^2(nk-k)^2}{(n-k)^2} + \eta L M^2
\end{equation}
where $f^\star$ is the global minimum of optimization function $f$.
\end{lemma}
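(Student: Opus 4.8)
The plan is to collapse the decentralized dynamics onto a single ``virtual'' sequence, the iterate average $\bar{x}_t = \frac{1}{n}\sum_i x_{i,t}$, run the standard non-convex SGD descent analysis on it, and pay a penalty for how far the individual $x_{i,t}$ drift from $\bar{x}_t$. The first step I would carry out is to exploit the gradient-correction form of the update \eqref{eq_mr_update_1}: because each worker averages the \emph{full} (history-carrying) parameters within its group and the $k$ groups partition all $n$ workers, summing $x_{i,t+1}=\frac{k}{n}\sum_{j\sim g_{i,t}}\hat{x}_{j,t}$ over a group of size $n/k$ returns exactly $\sum_{j\in G}\hat{x}_{j,t}$. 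Hence the group means telescope and $\bar{x}_{t+1}=\frac{1}{n}\sum_i \hat{x}_{i,t}=\bar{x}_t-\eta\,\bar{g}_t$ with $\bar{g}_t=\frac{1}{n}\sum_i \nabla f(x_{i,t};\xi_i)$. Thus the shuffle step is invisible to the average iterate, which simply performs SGD with the mean stochastic gradient; this invariance is precisely what averaging the corrected parameters (rather than raw gradients) buys us.

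Next I would apply $L$-smoothness (Assumption~\ref{l_smooth}) in descent-lemma form to $f(\bar{x}_{t+1})$, substitute $\bar{x}_{t+1}-\bar{x}_t=-\eta\bar{g}_t$, and take expectations using unbiasedness (Assumption~\ref{unbiased}) so that $\mathbb{E}[\bar{g}_t]=\frac{1}{n}\sum_i \nabla f(x_{i,t})$. Expanding the inner product via $2\langle a,b\rangle=\|a\|^2+\|b\|^2-\|a-b\|^2$ isolates $\|\nabla f(\bar{x}_t)-\frac{1}{n}\sum_i \nabla f(x_{i,t})\|^2$, which by smoothness and convexity of the norm is at most $\frac{L^2}{n}\sum_i \|x_{i,t}-\bar{x}_t\|^2$. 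Bounding the quadratic remainder $\mathbb{E}\|\bar{g}_t\|^2\le M^2$ (Assumption~\ref{bouned}) then yields, after rearranging, summing over $t$, and dividing by $T$, the terms $\frac{2(f(\bar{x}_0)-f^\star)}{\eta T}$ and $\eta L M^2$. At this point everything hinges on a uniform bound for the consensus error $V_t:=\frac{1}{n}\sum_i \mathbb{E}\|x_{i,t}-\bar{x}_t\|^2$.

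The core of the argument, and the step I expect to be hardest, is bounding $V_t$ under the random regrouping. I would set up a one-step recursion in which the local gradient step inflates the deviation by an increment controlled by the bounded gradient, namely $\|\hat{x}_{i,t}-\bar{\hat{x}}_t\|$ grows by at most $2\eta M$ (which is the source of the $4=2^2$ and the $M^2$ in the target term), while the shuffle-and-average step \emph{contracts} it. The key computation is the contraction factor of averaging over a uniformly random partition into $k$ equal groups: for zero-mean deviations $\{\Delta_j\}$, a direct second-moment calculation using $\mathbb{P}(j\in G)=\frac{1}{k}$ and $\mathbb{P}(j,l\in G)=\frac{m-1}{k(n-1)}$ with $m=n/k$, together with $\sum_j \Delta_j=0$, shows the squared deviation shrinks by a factor $\rho$ of order $\frac{k-1}{n-1}$, strictly below one exactly when $k<n$ and degenerating to $1$ as $k\to n$. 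Feeding this contraction and the per-step increment into a geometric recursion and summing produces a steady-state bound whose $\frac{1}{1-\rho}$ factor is of order $\frac{n-1}{n-k}$; up to the group-count normalization this gives the $\frac{nk-k}{n-k}$ dependence, and squaring the resulting deviation bound and multiplying by $L^2$ yields exactly $\frac{4\eta^2 L^2 M^2(nk-k)^2}{(n-k)^2}$.

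Finally I would substitute the uniform consensus bound $L^2 V_t\le \frac{4\eta^2 L^2 M^2(nk-k)^2}{(n-k)^2}$ into the telescoped descent inequality, so that the average squared gradient norm is dominated by the optimization term, the drift term, and $\eta L M^2$, which is the claim. Two points will need care. First, Assumption~\ref{bouned} is stated as a first-moment bound $\mathbb{E}\|\nabla f(x)\|\le M$, whereas both the descent remainder and the increment naturally want a second moment; I would either strengthen it to $\mathbb{E}\|\nabla f\|^2\le M^2$ or carry the deterministic gradient bound through, and reconcile the first-moment deviation recursion with the second-moment quantity $V_t$ (the precise stated constant, with its $(nk-k)$ numerator, matches the first-moment/deterministic normalization rather than the sharp aggregate second-moment factor $\frac{k-1}{n-1}$). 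Second, since the grouping at iteration $t$ is drawn independently of the history, the contraction identity must be applied as an expectation over the regrouping conditioned on the current deviations before the recursion is unrolled; propagating this conditioning cleanly across all iterations is the main obstacle to making the consensus-error bound fully rigorous.
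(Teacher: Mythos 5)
Your proposal follows essentially the same route as the paper's proof: both analyze the averaged iterate $\bar{x}_t$ (which evolves as plain SGD because group-averaging the corrected parameters preserves the global mean), apply the descent lemma, and control the resulting consensus/drift penalty by a geometric series whose ratio is the probability that workers remain unaveraged under random shuffling, arriving at the same $(nk-k)/(n-k)$ factor. The differences are purely technical---you split the cross term with the polarization identity and track second-moment deviations, whereas the paper uses Cauchy--Schwarz plus Young's inequality on first moments---and the two gaps you flag (the first- versus second-moment gradient bound, and conditioning the contraction on history when unrolling) are issues the paper's own proof glosses over as well.
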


\begin{theorem}
	Given a success parameter $\epsilon>0$, consider iterations for non-convex function $f$. Set a fixed learning rate $\eta = \min\left\{\frac{\epsilon}{4LM^2}, \frac{\sqrt{\epsilon}(n-k)}{4(nk-k)LM}\right\}$, we have $\frac{\sum_{t=0}^{T-1} \mathbb{E}[\|\nabla f(\bar{x}_t)\|^2]}{T} \leq \epsilon$  for every iteration $T \ge \frac{4(f(\bar{x}_0)-f^\star)}{\eta \epsilon}$.
\end{theorem}

\section{Experiment}
In this section, we briefly discuss metrics and setup and review characteristics of those datasets,
and finally presenting experimental results and analysis.

\subsection{Experimental Settings}
\paragraph{Hardware.} 
We evaluate SESGD on a cluster with 4 nodes, wich are connected with 1Gbps Ethernet. The latency among nodes is about 0.1ms. Each node has 4 Nvida Tesla K80 GPUs, 2 Intel Xeon CPU E5-2660 cores and 128G memory. Each GPU is viewed as one worker in our experiment.

\paragraph{Software.}
We use PyTorch 1.3.1 and CUDA 10.1 to implement the algorithms in our experiments.
We implement the SESGD imitating the PyTorch \emph{DistributedDataParallel} module and use \emph{register\_hook} in PyTorch Tensor module to implement the overlap between computation and communication.

\paragraph{Datasets.}
We use three datasets for image classification.
\begin{itemize}
	\item CIFAR10 \cite{cifar10}: it consists of a training set of 50, 000 images from 10 classes, and a test set of 10, 000 images.
	\item CIFAR100 \cite{cifar10}: it is similar to CIFAR10 but has 100 classes.
	\item ImageNet \cite{deng2009imagenet}: the largest public dataset for image classification, includng 14.2 million labeled images from 1000 categories.
\end{itemize}

\paragraph{Tasks.}
We train ResNet18 \cite{resnet}, DenseNet121 \cite{densenet} and ResNet50 \cite{resnet} on 
CIFAR10, CIFAR100 and ImageNet separately.

\paragraph{Baseline.}
We compare SESGD with Ring-SGD and Local-SGD \cite{localsgd}, which updates parameters in local workers for several iterations, then
synchronize parameters with others to gain shorter training time. 
At the same time, we combine SESGD and Local-SGD (Local-SESGD) for further comparison.
All of those communication algorithms support computation and communication overlap.

\paragraph{Hyper-parameters.}
We use the following hyper-parameters.
\begin{itemize}
	\item Global batch size: 256 for both ResNet18 and DenseNet121, 512 for ResNet50.
	\item Learning rate: we start with learning rate from 0.1 and decay it a factor of 5 every 20 epochs for ResNet18 and DenseNet121. For ResNet50, the factor is 10 every 10 epochs.
	\item Weight decay: $5 \times 10^{-4}$ for ResNet18 and DenseNet121. $10^{-4}$ for ResNet50 as suggested in \cite{resnet}.
	\item Epoch: We run ResNet18 and DenseNet121 for 100 epochs. And for ResNet50 running on ImageNet, we run 30 epochs due to the time limit of hardware resources.
	\item Momentum: 0.9 for all the trainings.
	\item Group number and synchronization period: We set the group number to 4 for SESGD with 16 workers. The synchronization period of Local-SGD is set to 2.
\end{itemize}

\paragraph{Network Latency.}
We measure the average iteration time during training with different network latency to evaluate
the performances of these algorithms. We use Linux Traffic Control (tc) to control the network latency.


\setlength{\belowcaptionskip}{-0.5cm}
\begin{figure}[tb]
	\centering  
	\subfigure[ResNet18]{
		\centering
		\includegraphics[width=0.24\linewidth]{./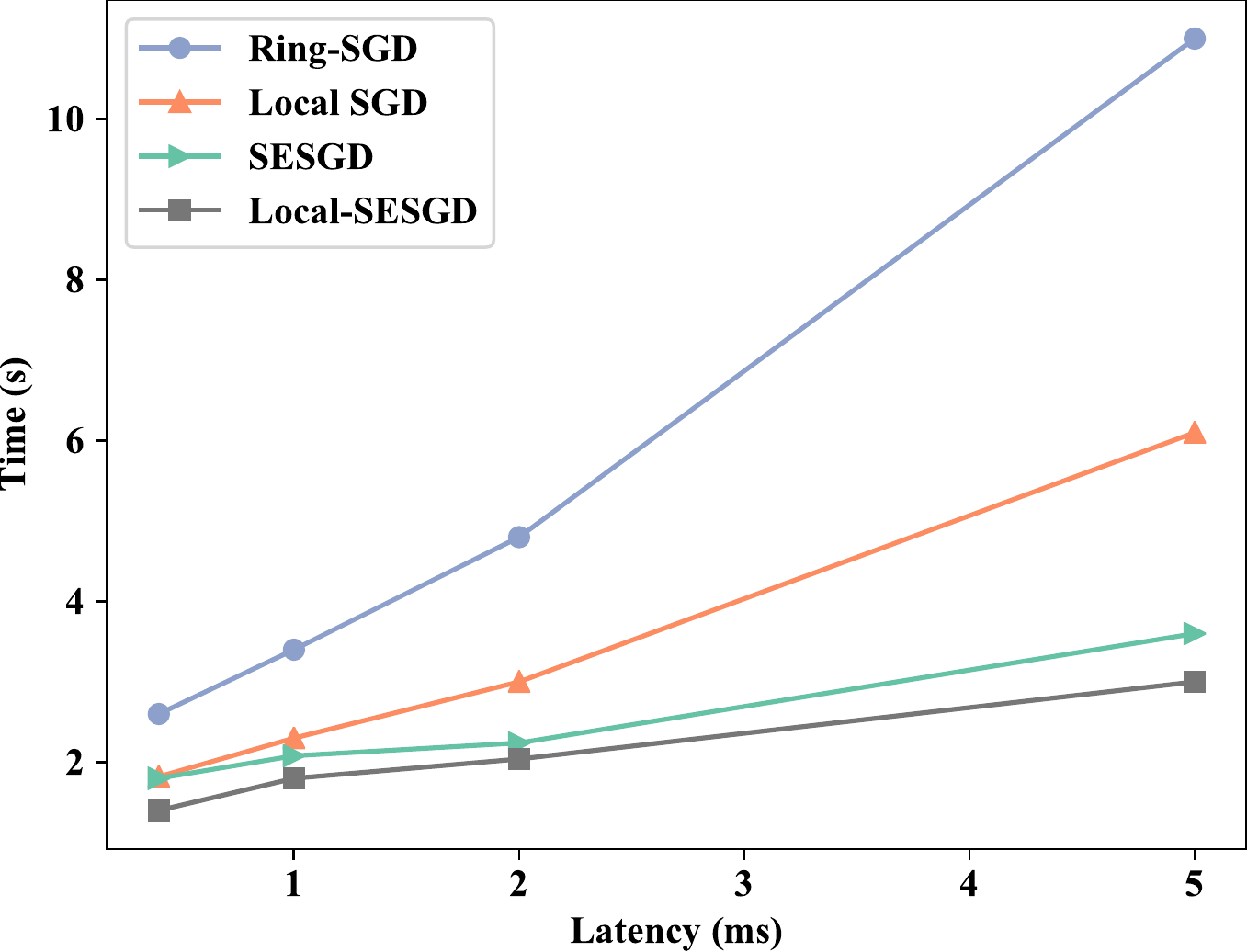}
		\label{fg-latency-ResNet18}
	}%
	\subfigure[DenseNet121]{
		\centering
		\includegraphics[width=0.24\linewidth]{./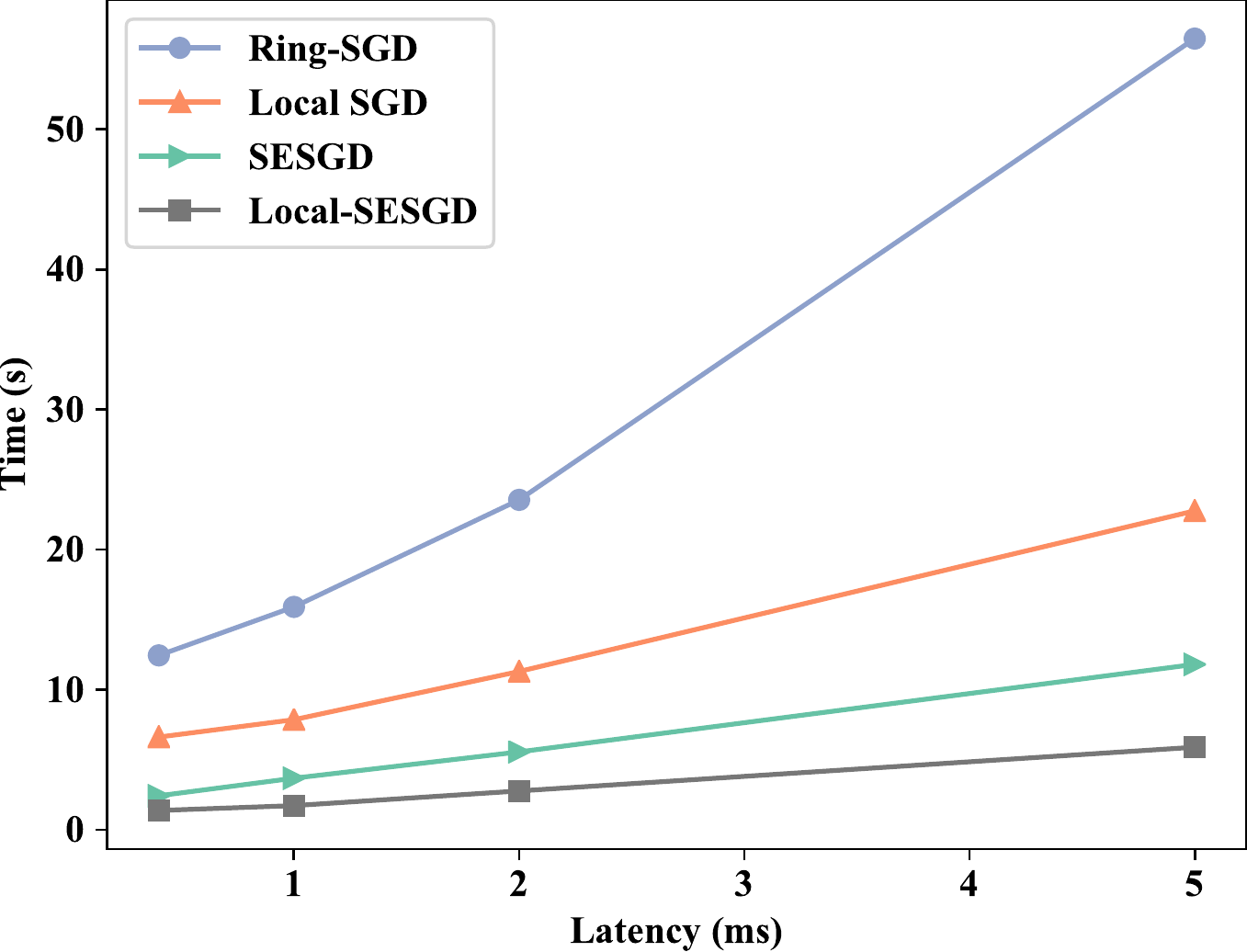}
		\label{fg-latency-DenseNet121}
	}%
	\subfigure[ResNet50]{
		\centering
		\includegraphics[width=0.24\linewidth]{./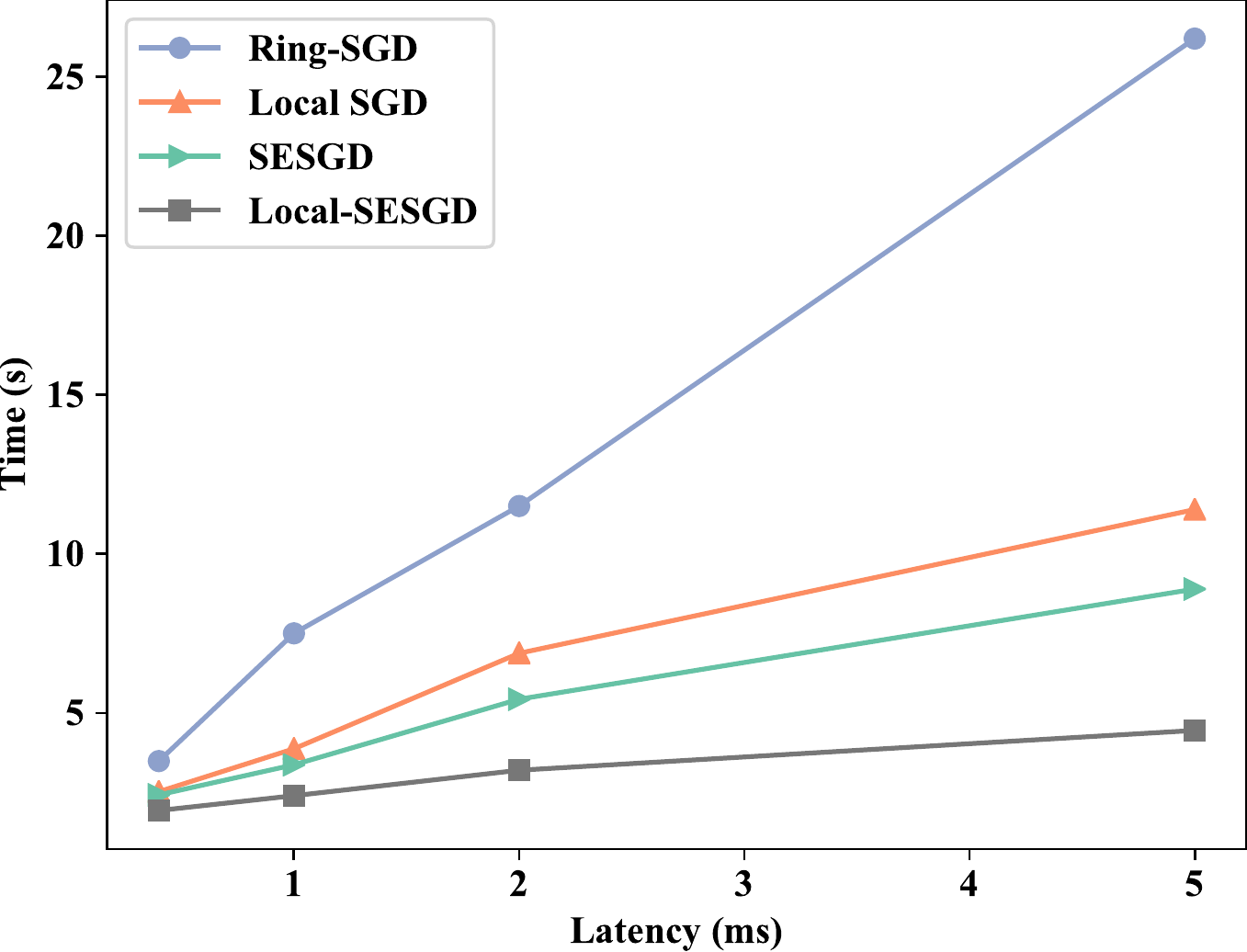}
		\label{fg-latency-ResNet50}
	}%
	\subfigure[VGG16]{
		\centering
		\includegraphics[width=0.24\linewidth]{./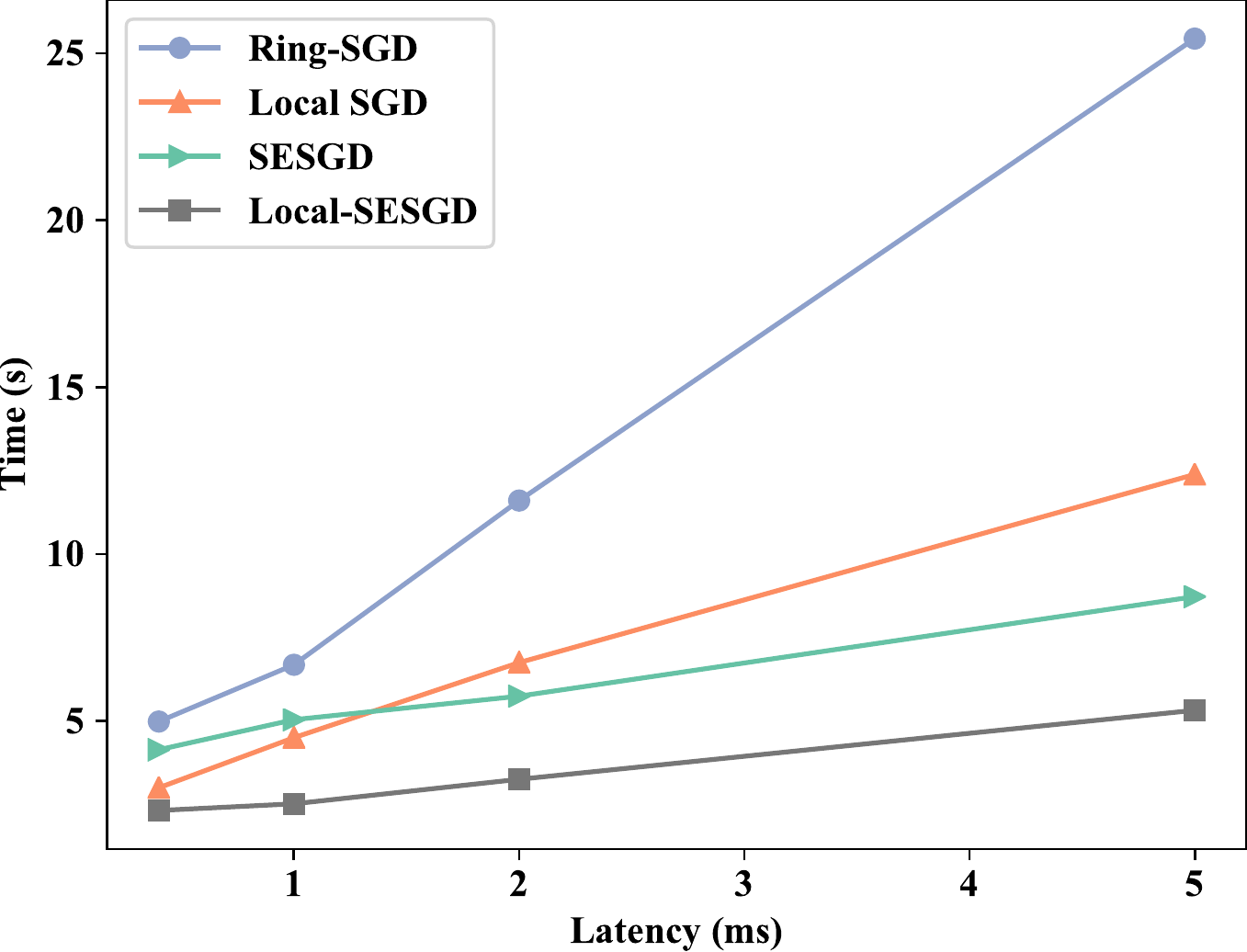}
		\label{fg-latency-VGG16}
	}%
	\centering
	\caption{These graphs plot the average per iteration time in different network latency with 16 workers. SESGD and Local-SESGD show good stability when the network latency grows.}
	\label{fg-latency}
\end{figure}

\setlength{\belowcaptionskip}{-0.5cm}
\begin{figure}[tb]
	\centering
	\subfigure[ResNet18 on CIFAR10]{  \label{fg_train_cifar10}
		\begin{minipage}[b]{0.3\textwidth}
			\includegraphics[width=\linewidth]{./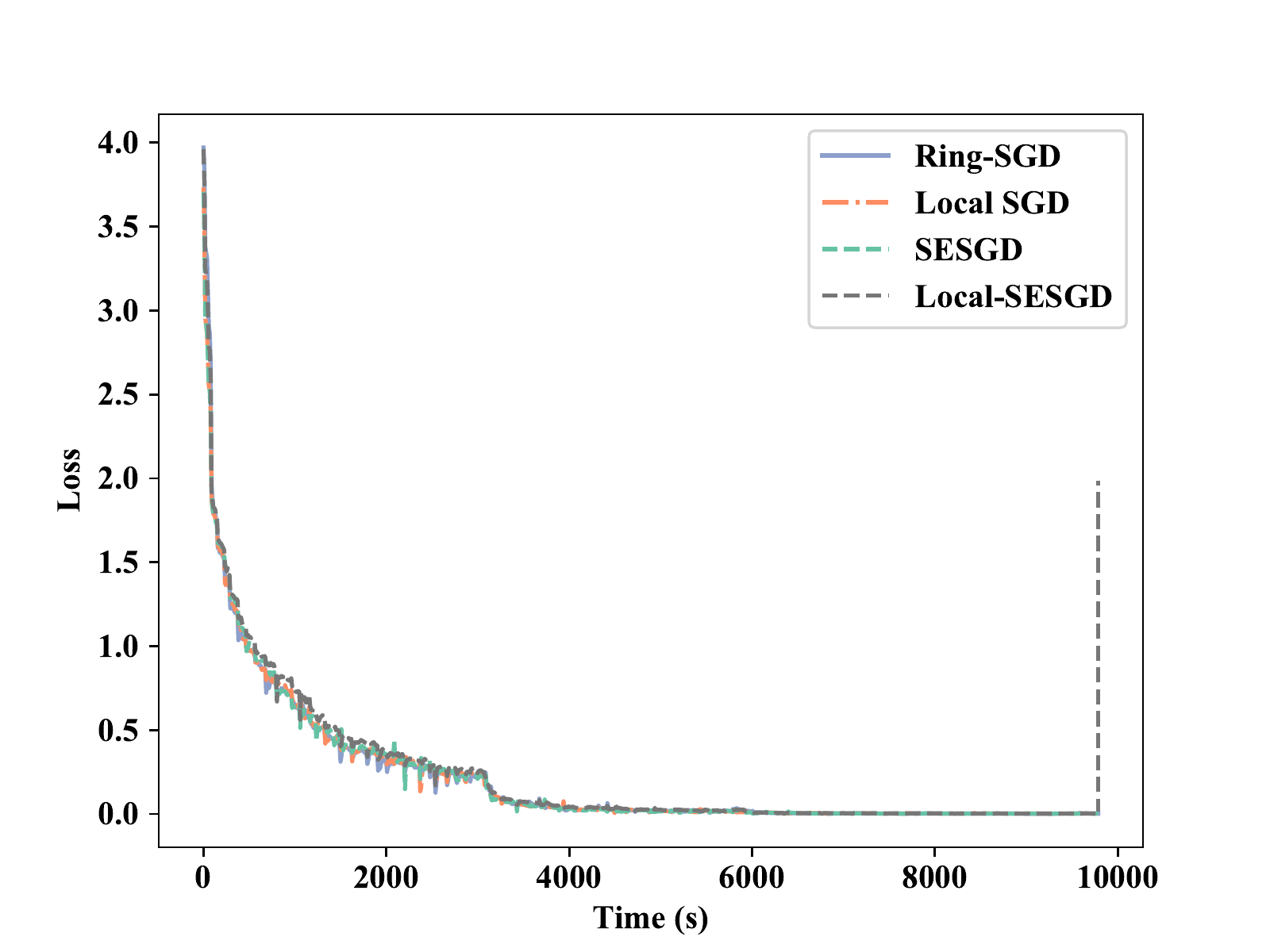}
			\includegraphics[width=\linewidth]{./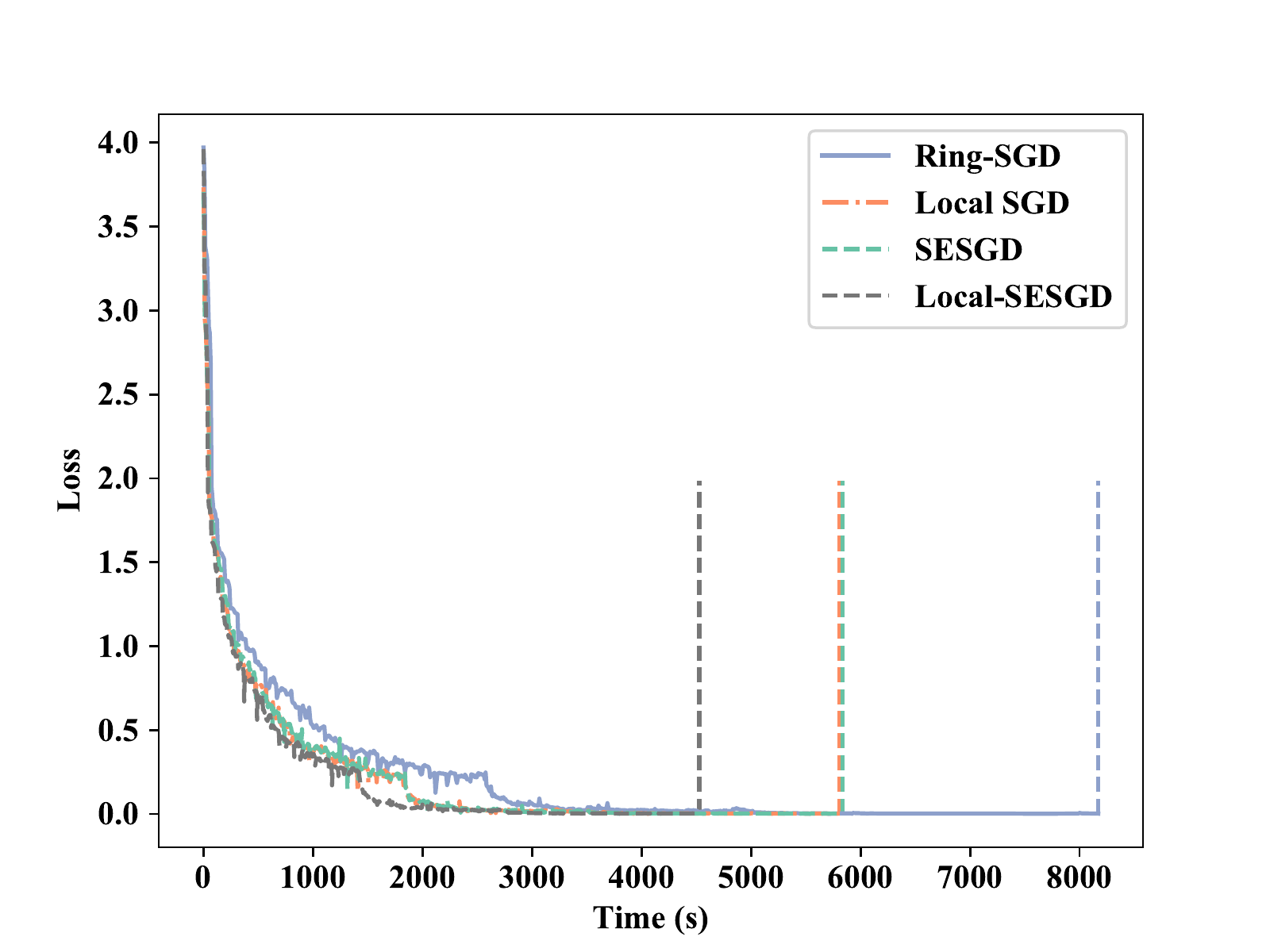}
			\includegraphics[width=\linewidth]{./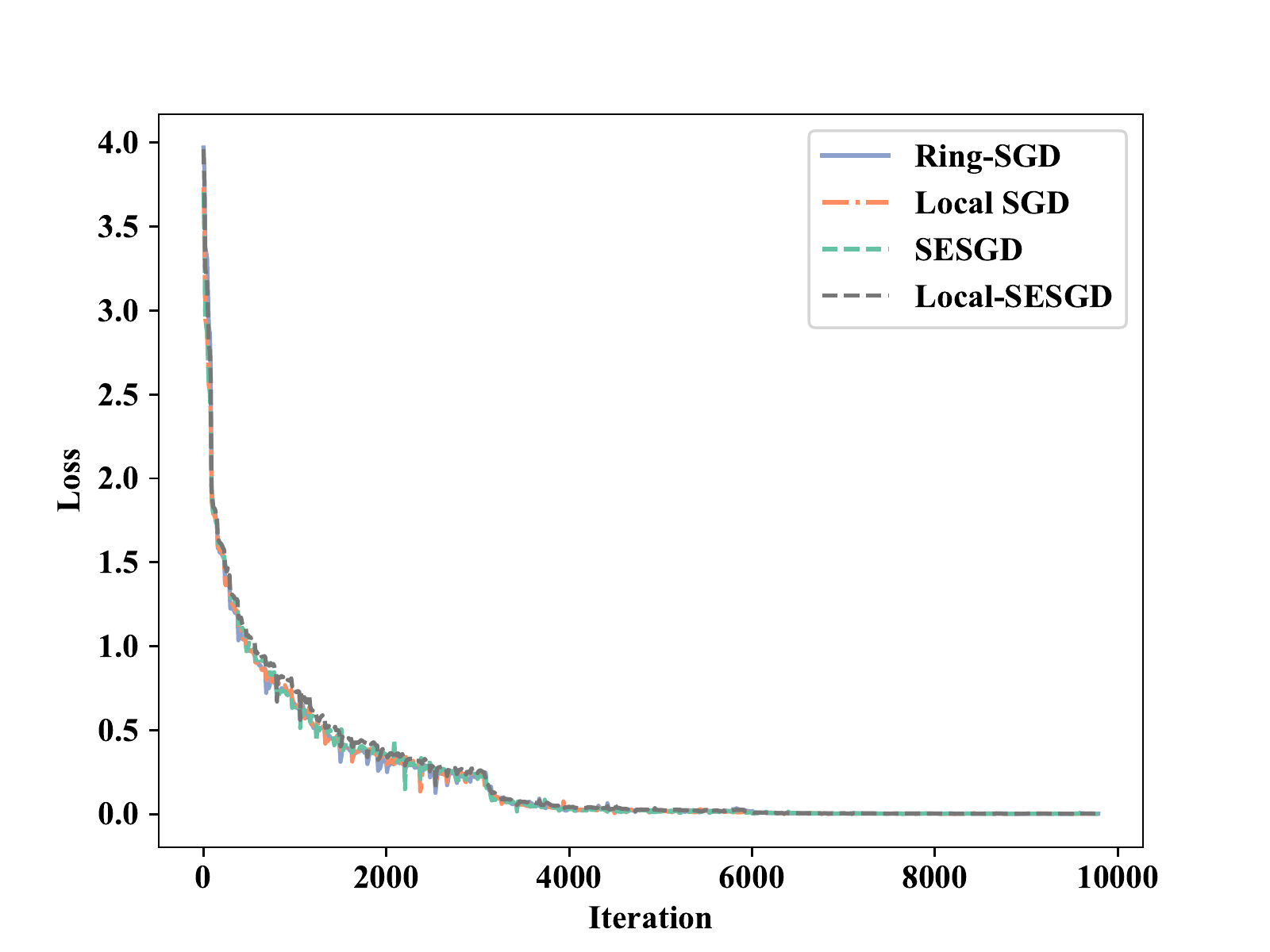}
		\end{minipage}	
	} 
	\subfigure[DenseNet121 on CIFAR100]{ \label{fg_train_cifar100}
		\begin{minipage}[b]{0.3\textwidth}
			\includegraphics[width=\linewidth]{./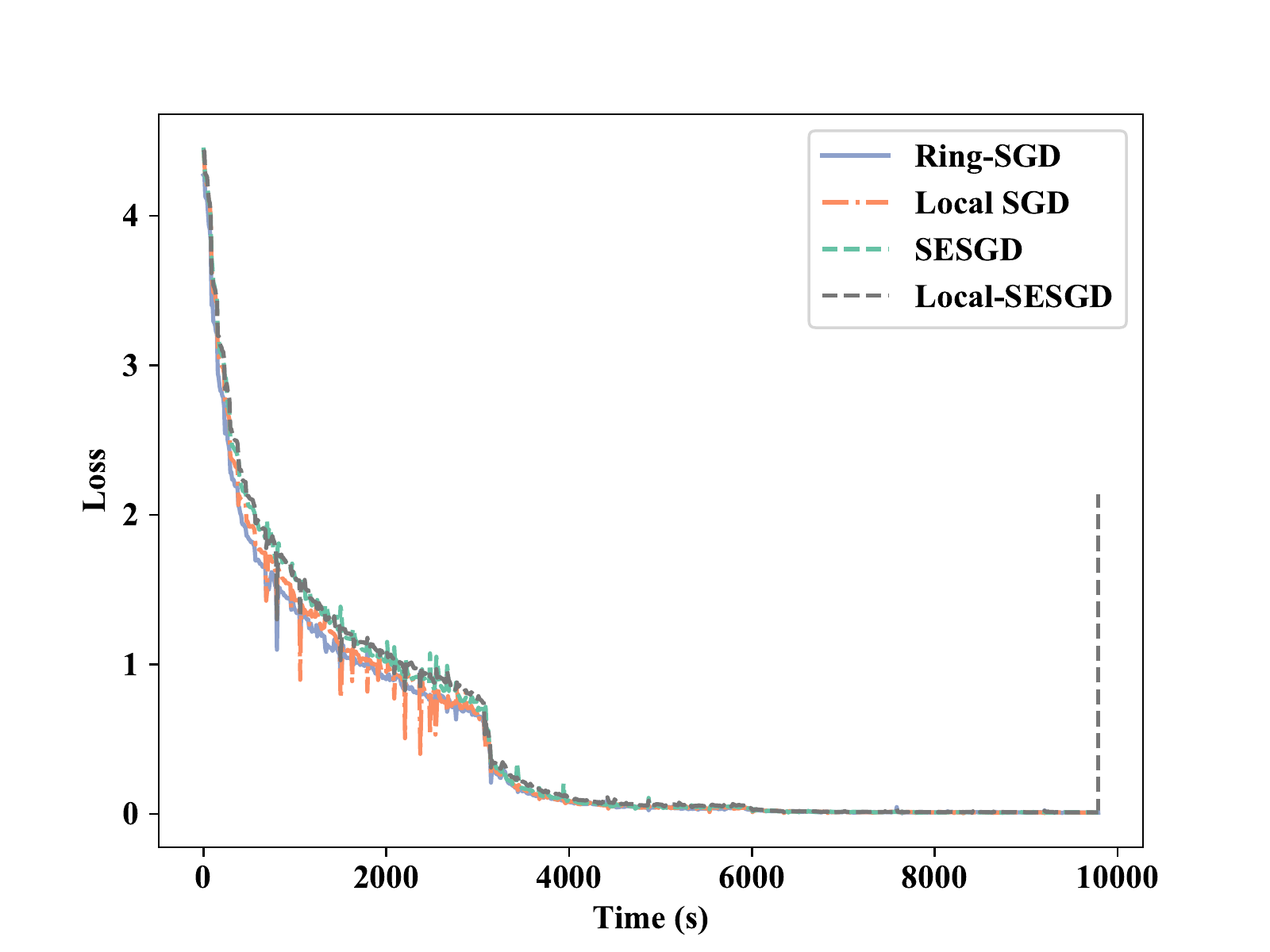}
			\includegraphics[width=\linewidth]{./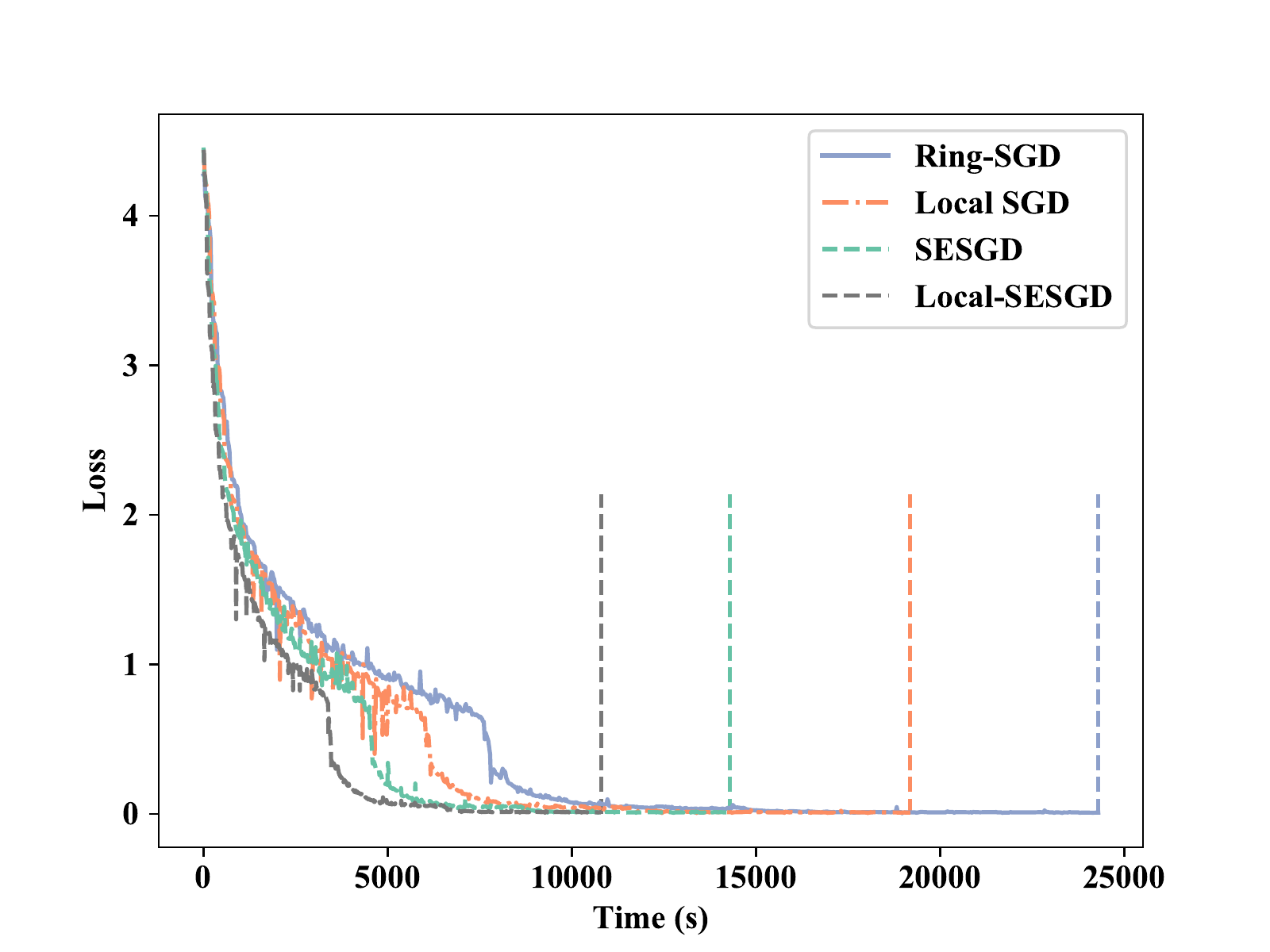}
			\includegraphics[width=\linewidth]{./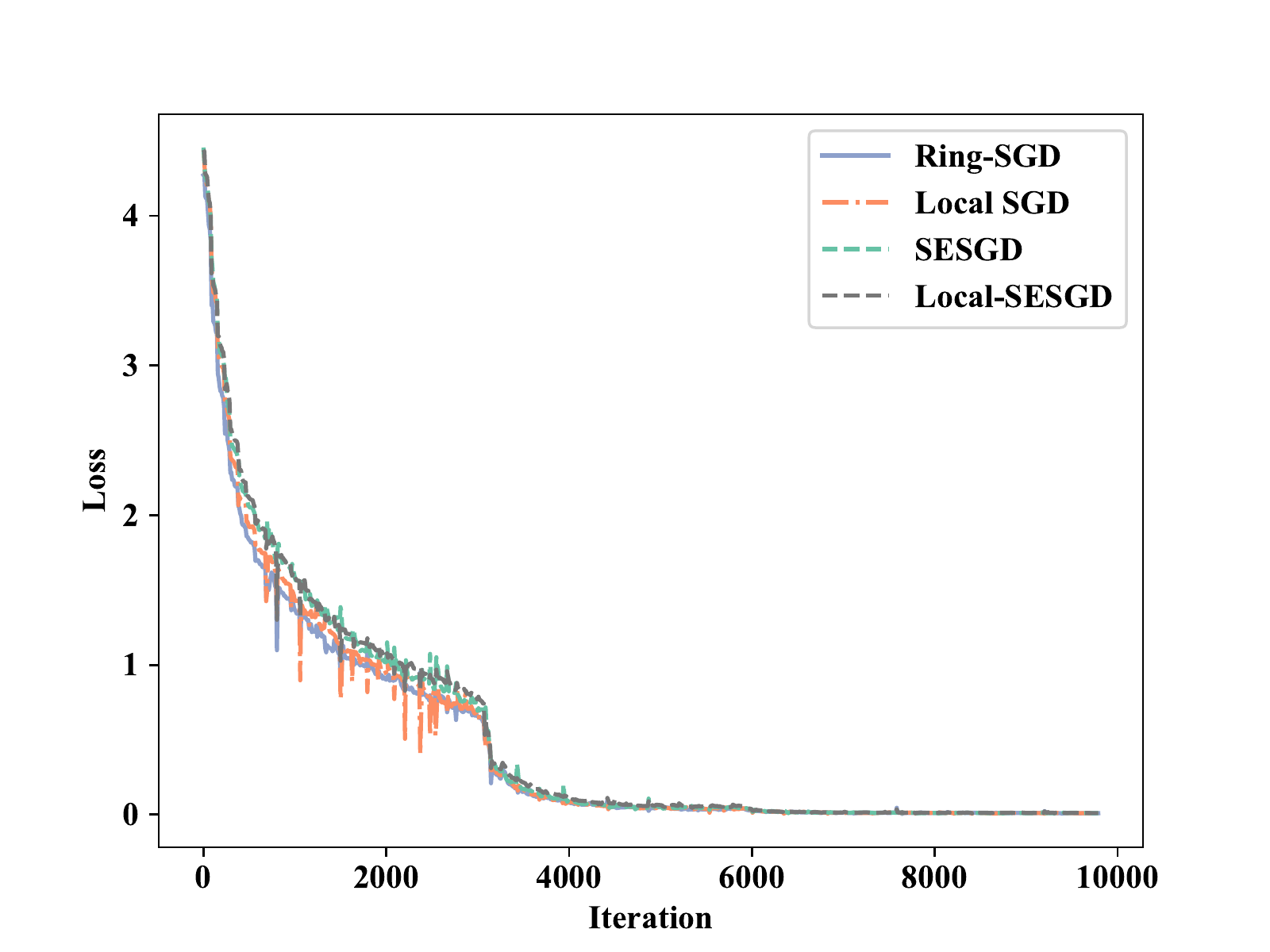}
		\end{minipage}	
	} 
	\subfigure[ResNet50 on ImageNet] { \label{fg_train_imagenet}
		\begin{minipage}[b]{0.3\textwidth}
			\includegraphics[width=\linewidth]{./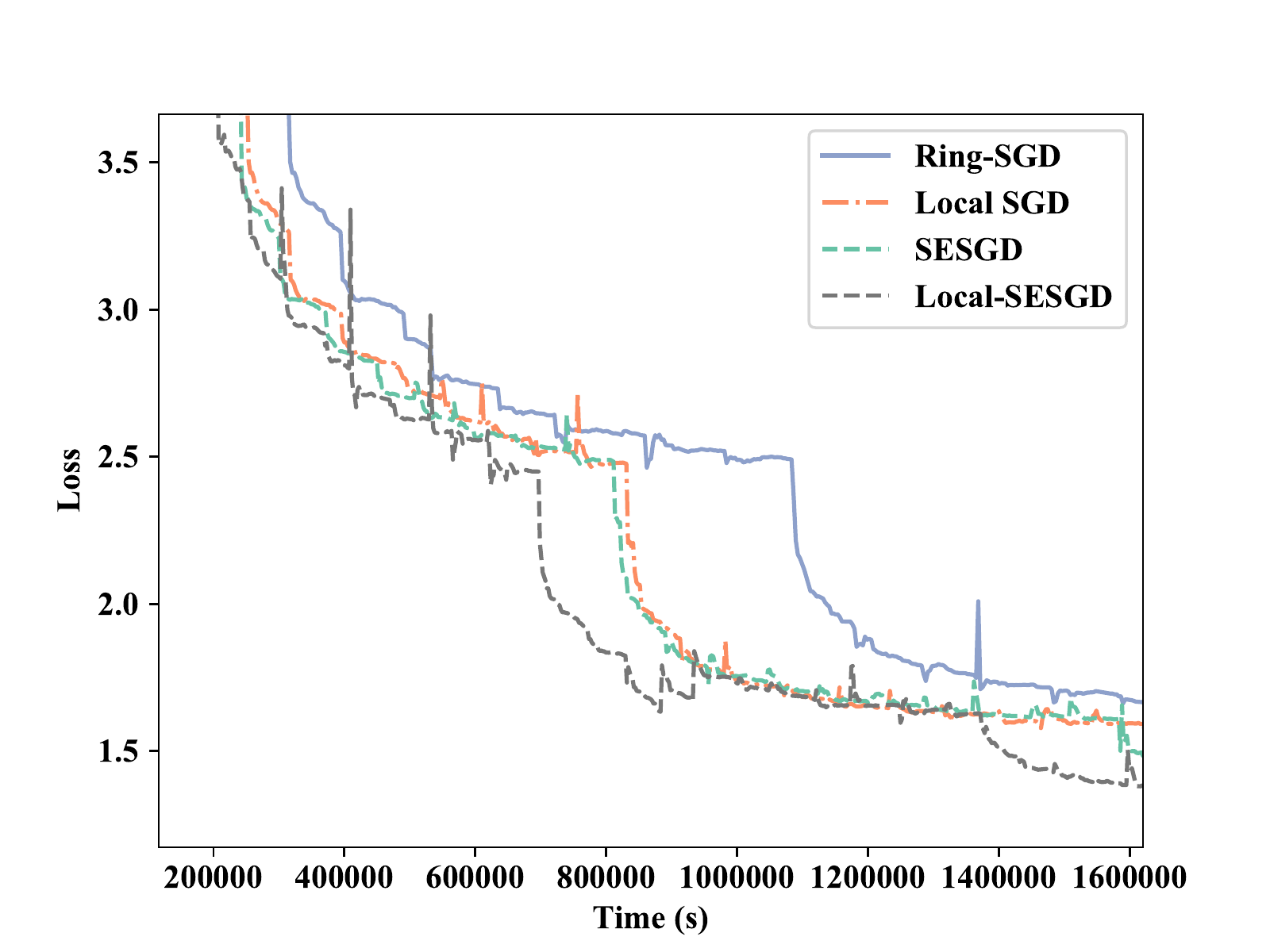}
			\includegraphics[width=\linewidth]{./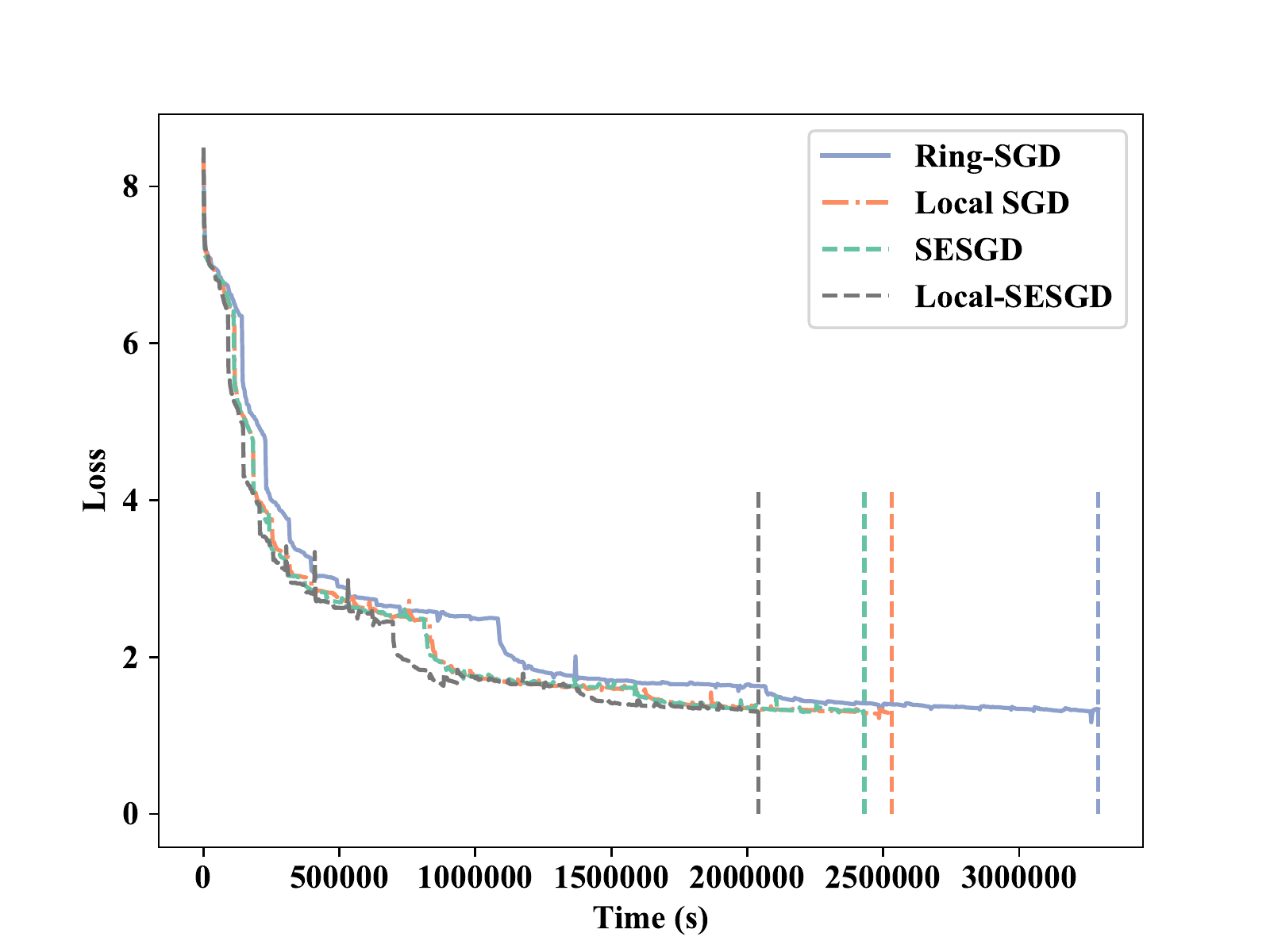}
			\includegraphics[width=\linewidth]{./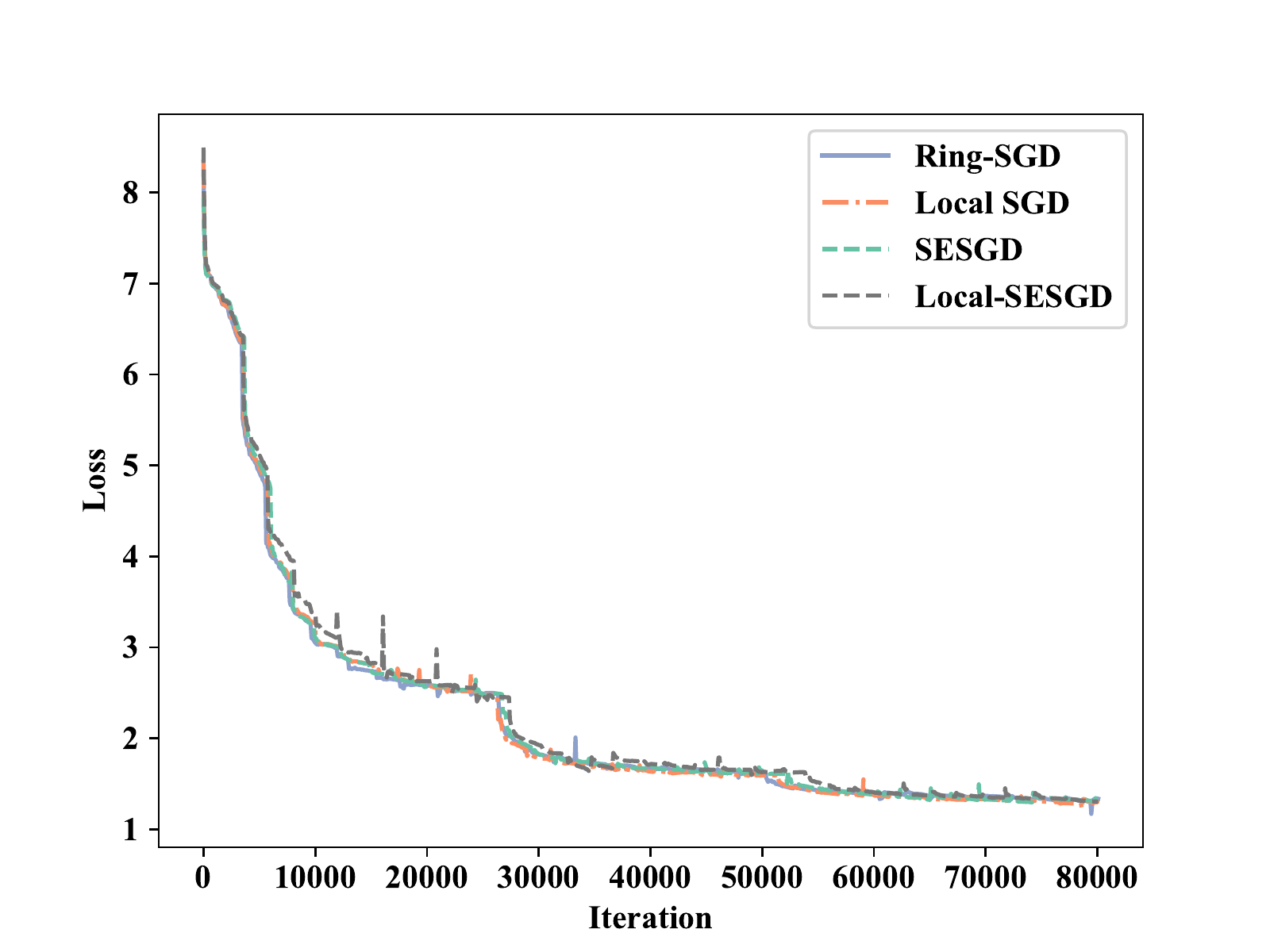}
		\end{minipage}	
	}
	\caption{Training in three different datasets with 0.1ms Ethernet latency. 
		The graphs on the top show the performances of these algorithms in specific periods. The medium parts show the overall training of DNNs per second while the bottom parts show the training loss per iteration.}
	\label{fg_time}
\end{figure}


\begin{figure}[htbp]
	\centering
	\subfigure[Speedup with 0.1ms latency] { \label{fg_speedup}
		\centering
		\includegraphics[width=0.48\linewidth]{./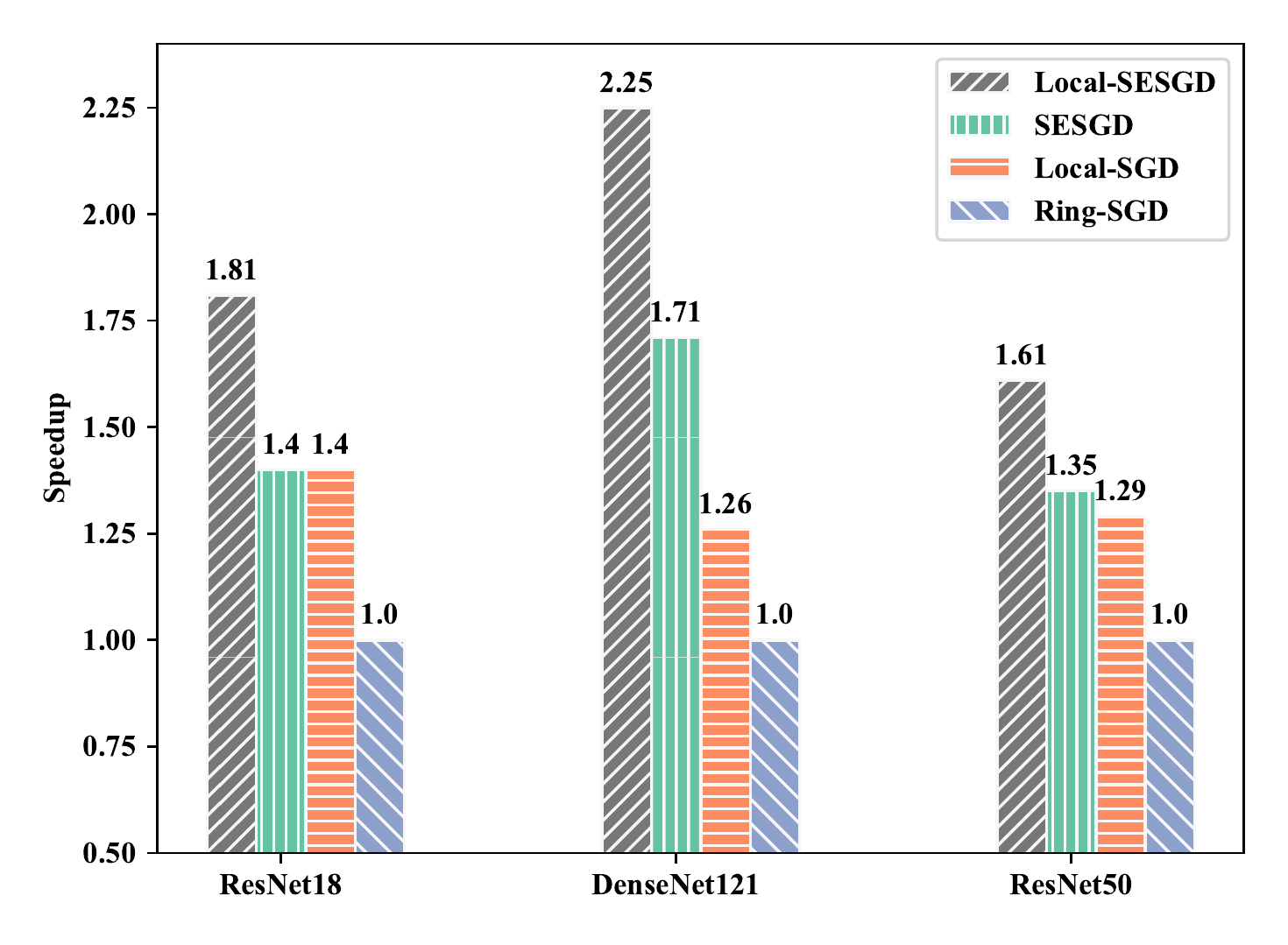} 
	}
	\subfigure[Model parameters distribution during training] { \label{fg_weight_distribution}
		\centering
		\includegraphics[width=0.48\linewidth]{./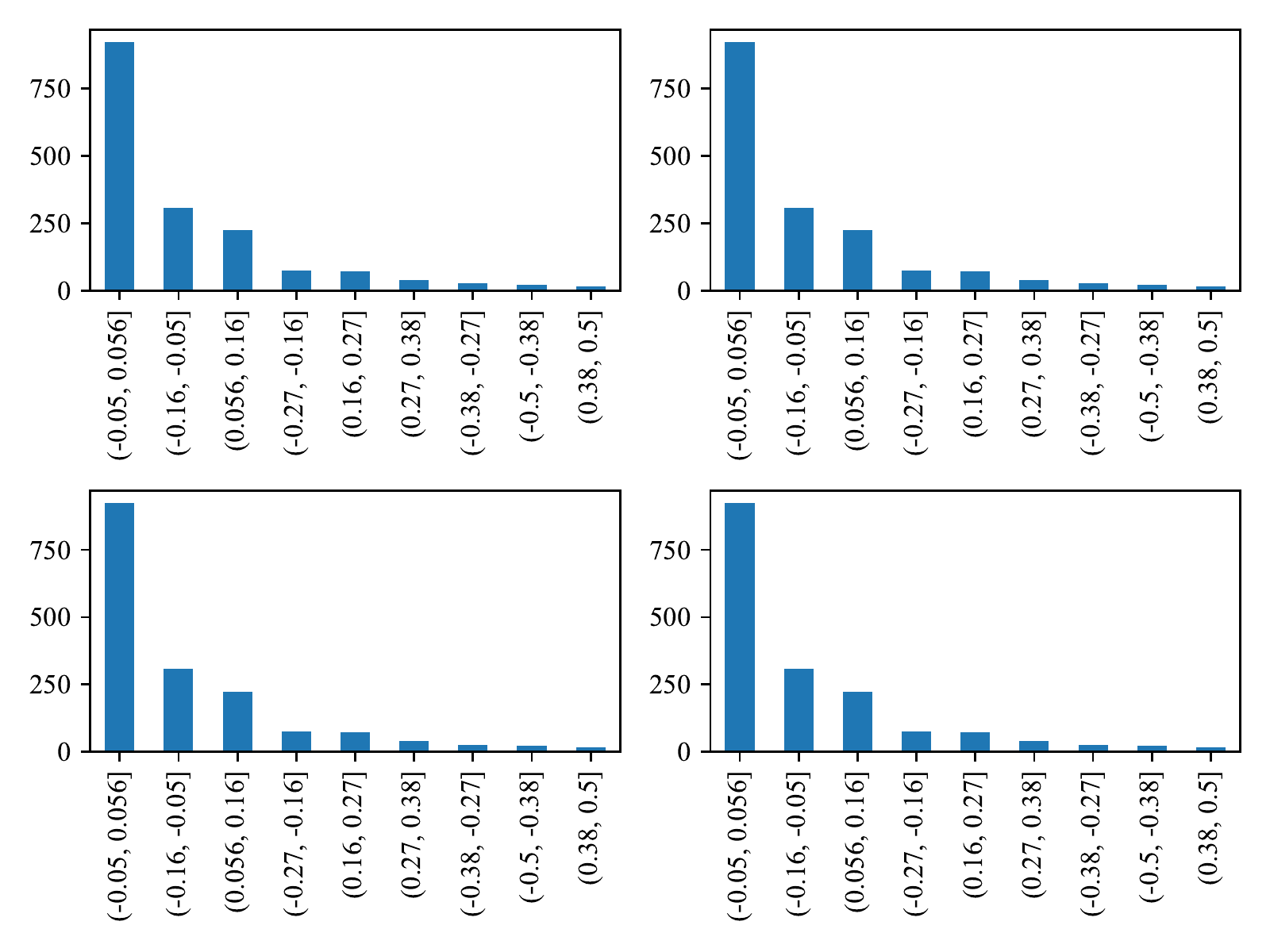} 
	}
	\caption{The left part shows the speedup comparing with different algorithms when training in 0.1ms Ethernet latency. And the right parts shows the parameters distribution in four different workers using SESGD the training progress The distribution is almost the same.}
\end{figure}

%

\subsection{Results Analysis}

\paragraph{Comparison of time speedup.}
Figure \ref{fg_time} plots the overall training time for different DNNs in three different datasets. And Figure \ref{fg_speedup} shows the final speedup for different algorithms. These experiments are all conducted with 0.1ms latency. SESGD achieves best among different algorithms and could combine with others to produce even better results. Comparing with the training time in Figure \ref{fg_train_cifar10} and Figure \ref{fg_train_cifar100}, SESGD shows an excellent performance in those models with many layers. In the training of DenseNet121, SESGD accelerates the training up to $1.7\times$.  Since SESGD makes previous efforts can be re-applied efficiently on large-scale training or with high network latency, $2.25 \times$ is achieved when combined with existing algorithms.

\paragraph{Comparison of convergence.}
\begin{wraptable}{r}{0.65\linewidth}
	\centering
	\begin{tabular}{lrrr}
		\toprule
		Algorithms      & CIFAR10 & CIFAR100 & ImageNet \\
		\midrule
		Ring-SGD   & 93.14\% & 73.69\%  & 71.168\% \\
		Local-SGD   & 92.97\% & 73.16\%  & 71.154\% \\
		SESGD       & 93.30\% & 73.62\%  & 71.266\% \\
		Local-SESGD & 93.02\% & 73.23\%  & 71.176\% \\
		\bottomrule
	\end{tabular}
	\caption{Final top-1 test accuracy  for three datasets}
	\label{tb_conv}
\end{wraptable}
During the training, the change of loss is quite similar for different algorithms, as shown in Figure \ref{fg_time}. Table \ref{tb_conv} lists the Top-1 Accuracy for different datasets when training with different algorithms. We can find that SESGD has the similar convergence performances on test datasets comparing with Ring-SGD, which confirms our proof in Section \ref{proof}.

\paragraph{Consistency of parameters.}
During training ResNet18 on CIFAR10 with SESGD, we randomly choose four workers and measure the parameter distributions in Conv1. Figure \ref{fg_weight_distribution} plots the result. 
Although we do not directly synchronize parameters among all workers, the model parameters of each worker maintain highly consistent.

\paragraph{Time cost in different latency.}
Figure \ref{fg-latency} shows the training time in different network latency. We run different DNNs in a range from 0.2ms (Ethernet) to 5ms (WLAN). As the latency increases, SESGD runs fastest and achieves almost $5 \times$ speedup against Ring-SGD in the WLAN latency. Local-SGD also performs better than Ring-SGD. But the handshakes during communication in Local-SGD remain unchanged, thus the communication time grows as the network latency increases. As Local SGD reduces the average number of parameters in every iteration and VGG16 has a large number of parameters and a smaller number of layers compared with others. Local SGD performs better than SESGD in shorter network latency. But in the case of large latency, idle time finally dominates. Thus SESGD performs best in the end.

\section{Conclusion}
In this paper, we observe the idle time caused by $O(n)$ handshakes in Ring-SGD harms distributed training performance and propose the Shuffle-Exchange SGD (SESGD) and guarantee its convergence. SESGD reduces the handshakes by dividing all workers in different groups and maintains the same performance with DSGD through Shuffle-Exchange and Gradient Correction operations. Our experiment shows SESGD can accelerate the DNN training by up to a range from $1.7\times$ to $5 \times$ in those networks with different latency.

\section*{Broader Impact}
Distributed SGD is widely adopt by those DL algorithms today. When training DNNs in datacenter, in order to reduce the communication overhead to gain shorter execution time, it is expensive to equip InfiniBand and Nvlink to get high bandwidth and low latency for training clusters. SESGD reduces the impact of network latency without introducing additional communication operations for ring based distributed training, which loose this constraint. When training DNN with those not-well-equipped cluster or in some specific applications (e.g., federated learning), SESGD can significantly reduce the communication delay without increasing the communication overhead or losing the accuracy.

%

\bibliographystyle{unsrt}
\bibliography{neurips_2020}

\begin{thebibliography}{10}

\bibitem{tensorflow}
Mart{\'\i}n Abadi, Paul Barham, Jianmin Chen, Zhifeng Chen, Andy Davis, Jeffrey
  Dean, Matthieu Devin, Sanjay Ghemawat, Geoffrey Irving, Michael Isard, et~al.
\newblock Tensorflow: A system for large-scale machine learning.
\newblock In {\em 12th $USENIX$ Symposium on Operating Systems Design and
  Implementation ($OSDI$ 16)}, pages 265--283, 2016.

\bibitem{pytorch}
Adam Paszke, Sam Gross, Soumith Chintala, Gregory Chanan, Edward Yang, Zachary
  DeVito, Zeming Lin, Alban Desmaison, Luca Antiga, and Adam Lerer.
\newblock Automatic differentiation in pytorch.
\newblock 2017.

\bibitem{sergeev2018horovod}
Alexander Sergeev and Mike Del~Balso.
\newblock Horovod: fast and easy distributed deep learning in tensorflow.
\newblock {\em arXiv preprint arXiv:1802.05799}, 2018.

\bibitem{quantization2}
Dan Alistarh, Jerry Li, Ryota Tomioka, and Milan Vojnovic.
\newblock Qsgd: Randomized quantization for communication-optimal stochastic
  gradient descent.
\newblock {\em arXiv preprint arXiv:1610.02132}, 2016.

\bibitem{quantization3}
Wei Wen, Cong Xu, Feng Yan, Chunpeng Wu, Yandan Wang, Yiran Chen, and Hai Li.
\newblock Terngrad: Ternary gradients to reduce communication in distributed
  deep learning.
\newblock In {\em Advances in neural information processing systems}, pages
  1509--1519, 2017.

\bibitem{sparse2}
Nikoli Dryden, Tim Moon, Sam~Ade Jacobs, and Brian Van~Essen.
\newblock Communication quantization for data-parallel training of deep neural
  networks.
\newblock In {\em 2016 2nd Workshop on Machine Learning in HPC Environments
  (MLHPC)}, pages 1--8. IEEE, 2016.

\bibitem{sparse3}
Alham~Fikri Aji and Kenneth Heafield.
\newblock Sparse communication for distributed gradient descent.
\newblock 2017.

\bibitem{dgc}
Yujun Lin, Song Han, Huizi Mao, Yu~Wang, and William~J Dally.
\newblock Deep gradient compression: Reducing the communication bandwidth for
  distributed training.
\newblock {\em arXiv preprint arXiv:1712.01887}, 2017.

\bibitem{zhang2017poseidon}
Hao Zhang, Zeyu Zheng, Shizhen Xu, Wei Dai, Qirong Ho, Xiaodan Liang, Zhiting
  Hu, Jinliang Wei, Pengtao Xie, and Eric~P Xing.
\newblock Poseidon: An efficient communication architecture for distributed
  deep learning on $\{$GPU$\}$ clusters.
\newblock In {\em 2017 $\{$USENIX$\}$ Annual Technical Conference
  ($\{$USENIX$\}$$\{$ATC$\}$ 17)}, pages 181--193, 2017.

\bibitem{li2018pipe}
Youjie Li, Mingchao Yu, Songze Li, Salman Avestimehr, Nam~Sung Kim, and
  Alexander Schwing.
\newblock Pipe-sgd: A decentralized pipelined sgd framework for distributed
  deep net training.
\newblock In {\em Advances in Neural Information Processing Systems}, pages
  8045--8056, 2018.

\bibitem{shen2019faster}
Shuheng Shen, Linli Xu, Jingchang Liu, Xianfeng Liang, and Yifei Cheng.
\newblock Faster distributed deep net training: computation and communication
  decoupled stochastic gradient descent.
\newblock {\em arXiv preprint arXiv:1906.12043}, 2019.

\bibitem{wang2019scalable}
Shaoqi Wang, Aidi Pi, and Xiaobo Zhou.
\newblock Scalable distributed dl training: Batching communication and
  computation.
\newblock In {\em Proc. of AAAI}, 2019.

\bibitem{45648}
Jakub Konečný, H.~Brendan McMahan, Felix~X. Yu, Peter Richtarik,
  Ananda~Theertha Suresh, and Dave Bacon.
\newblock Federated learning: Strategies for improving communication
  efficiency.
\newblock In {\em NIPS Workshop on Private Multi-Party Machine Learning}, 2016.

\bibitem{mxnet}
Tianqi Chen, Mu~Li, Yutian Li, Min Lin, Naiyan Wang, Minjie Wang, Tianjun Xiao,
  Bing Xu, Chiyuan Zhang, and Zheng Zhang.
\newblock Mxnet: A flexible and efficient machine learning library for
  heterogeneous distributed systems.
\newblock {\em arXiv preprint arXiv:1512.01274}, 2015.

\bibitem{ps_arch}
Mu~Li, David~G Andersen, Jun~Woo Park, Alexander~J Smola, Amr Ahmed, Vanja
  Josifovski, James Long, Eugene~J Shekita, and Bor-Yiing Su.
\newblock Scaling distributed machine learning with the parameter server.
\newblock In {\em 11th $USENIX$ Symposium on Operating Systems Design and
  Implementation ($OSDI$ 14)}, pages 583--598, 2014.

\bibitem{szegedy2015going}
Christian Szegedy, Wei Liu, Yangqing Jia, Pierre Sermanet, Scott Reed, Dragomir
  Anguelov, Dumitru Erhan, Vincent Vanhoucke, and Andrew Rabinovich.
\newblock Going deeper with convolutions.
\newblock In {\em Proceedings of the IEEE conference on computer vision and
  pattern recognition}, pages 1--9, 2015.

\bibitem{szegedy2016rethinking}
Christian Szegedy, Vincent Vanhoucke, Sergey Ioffe, Jon Shlens, and Zbigniew
  Wojna.
\newblock Rethinking the inception architecture for computer vision.
\newblock In {\em Proceedings of the IEEE conference on computer vision and
  pattern recognition}, pages 2818--2826, 2016.

\bibitem{d2_reduce1}
Xianyan Jia, Shutao Song, Wei He, Yangzihao Wang, Haidong Rong, Feihu Zhou,
  Liqiang Xie, Zhenyu Guo, Yuanzhou Yang, Liwei Yu, et~al.
\newblock Highly scalable deep learning training system with mixed-precision:
  Training imagenet in four minutes.
\newblock {\em arXiv preprint arXiv:1807.11205}, 2018.

\bibitem{localsgd}
Tao Lin, Sebastian~U Stich, Kumar~Kshitij Patel, and Martin Jaggi.
\newblock Don't use large mini-batches, use local sgd.
\newblock {\em arXiv preprint arXiv:1808.07217}, 2018.

\bibitem{fixed_graph}
Xiangru Lian, Ce~Zhang, Huan Zhang, Cho-Jui Hsieh, Wei Zhang, and Ji~Liu.
\newblock Can decentralized algorithms outperform centralized algorithms? a
  case study for decentralized parallel stochastic gradient descent.
\newblock In {\em Advances in Neural Information Processing Systems}, pages
  5330--5340, 2017.

\bibitem{cifar10}
Alex Krizhevsky, Geoffrey Hinton, et~al.
\newblock Learning multiple layers of features from tiny images.
\newblock Technical report, Citeseer, 2009.

\bibitem{deng2009imagenet}
Jia Deng, Wei Dong, Richard Socher, Li-Jia Li, Kai Li, and Li~Fei-Fei.
\newblock Imagenet: A large-scale hierarchical image database.
\newblock In {\em 2009 IEEE conference on computer vision and pattern
  recognition}, pages 248--255. Ieee, 2009.

\bibitem{resnet}
Kaiming He, Xiangyu Zhang, Shaoqing Ren, and Jian Sun.
\newblock Deep residual learning for image recognition.
\newblock pages 770--778, 2016.

\bibitem{densenet}
Gao Huang, Zhuang Liu, Laurens Van Der~Maaten, and Kilian~Q Weinberger.
\newblock Densely connected convolutional networks.
\newblock In {\em Proceedings of the IEEE conference on computer vision and
  pattern recognition}, pages 4700--4708, 2017.

\end{thebibliography}

\newpage

\appendix

\section{Convergence Guarantee}

\begin{lemma}
	
	Suppose that all the assumptions holds and $k < n$. We have
	\begin{equation}
	\frac{\sum_{t=0}^{T-1} \mathbb{E}[\|\nabla f(\bar{x}_t)\|^2]}{T} \leq \frac{2(f(\bar{x}_0) - f^\star)}{\eta T} +  \frac{4\eta^2 L^2 M^2(nk-k)^2}{(n-k)^2} + \eta L M^2
	\end{equation}
	where $f^\star$ is the global minimum of optimization function $f$.
\end{lemma}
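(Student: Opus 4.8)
The plan is to reduce the whole analysis to a \emph{perturbed single-sequence SGD} on the running average $\bar{x}_t = \frac{1}{n}\sum_i x_{i,t}$, and then to separately control how far individual workers drift from this average (the consensus error). First I would show that the group-averaging step in \eqref{eq_mr_update_1} conserves the global mean: since every worker in a group is overwritten by that group's average and the groups partition all $n$ workers, each post-update iterate $\hat{x}_{j,t}$ is counted exactly $n/k$ times, so $\bar{x}_{t+1} = \frac{1}{n}\sum_j \hat{x}_{j,t} = \bar{x}_t - \eta \bar{g}_t$ with $\bar{g}_t = \frac{1}{n}\sum_j \nabla f(x_{j,t};\xi_j)$. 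Thus $\bar{x}_t$ obeys an exact SGD recursion regardless of how workers are shuffled.

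Next I would apply the $L$-smooth descent lemma to $f(\bar{x}_{t+1})$, substitute $\bar{x}_{t+1}-\bar{x}_t = -\eta\bar{g}_t$, and take expectations. By Assumption~\ref{unbiased} the linear term becomes $-\eta\langle \nabla f(\bar{x}_t),\, \frac{1}{n}\sum_j\nabla f(x_{j,t})\rangle$, and the polarization bound $\langle a,b\rangle \ge \frac12\|a\|^2 - \frac12\|a-b\|^2$ together with $L$-smoothness controls the mismatch $\|a-b\|^2$ by $L^2 V_t$, where $V_t = \frac{1}{n}\sum_j\mathbb{E}\|x_{j,t}-\bar{x}_t\|^2$ is the consensus error. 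The quadratic term is bounded by $\frac{L\eta^2}{2}\mathbb{E}\|\bar{g}_t\|^2 \le \frac{L\eta^2}{2}M^2$ via Assumption~\ref{bouned}. Rearranging, multiplying by $2/\eta$, telescoping over $t=0,\dots,T-1$, and using $f(\bar{x}_T)\ge f^\star$ reproduces exactly the first term $\frac{2(f(\bar{x}_0)-f^\star)}{\eta T}$ and the last term $\eta L M^2$, leaving a residual $\frac{L^2}{T}\sum_t V_t$ that must be matched to the middle term.

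The main obstacle is therefore bounding the consensus error $V_t$, and this is where the Shuffle-Exchange mechanism does the work. Because every worker inherits its group's mean, $V_{t+1}$ equals the \emph{between-group} variance of the post-update iterates $\hat{x}_{j,t}$, namely $\frac{1}{k}\sum_G \mathbb{E}\|\overline{\hat{x}}_G - \bar{\hat{x}}_t\|^2$. The key computation is that, in expectation over a uniformly random partition into $k$ groups of size $n/k$, the finite-population sampling-variance identity gives the expected between-group variance as $\frac{k-1}{n-1}$ times the total variance; hence $\mathbb{E}[V_{t+1}] = \frac{k-1}{n-1}\,\hat{V}_t$, a genuine contraction with rate $\rho=\frac{k-1}{n-1}<1$ precisely because $k<n$. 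I would then bound the post-update spread $\hat{V}_t$ in terms of $V_t$ and $\eta^2 M^2$ by expanding $\hat{x}_{j,t}-\bar{\hat{x}}_t = (x_{j,t}-\bar{x}_t) - \eta(g_{j,t}-\bar{g}_t)$ with a tuned Young's inequality, yielding a linear recursion $V_{t+1}\le \rho(1+O(\eta))V_t + O(\eta^2 M^2)$. Solving this in closed form, where the geometric factor $\frac{1}{1-\rho}=\frac{n-1}{n-k}$ is exactly what manufactures the $(n-k)$ in the denominator, produces the averaged consensus bound $\frac{1}{T}\sum_t V_t \le \frac{4\eta^2 M^2 (nk-k)^2}{(n-k)^2}$; multiplying by $L^2$ recovers the middle term.

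The two delicate points I expect to fight with are, first, verifying the sampling-variance contraction for the vector-valued iterates and confirming that the shuffle drawn at step $t$ is independent of the quantities $\hat{x}_{j,t}$ being averaged (so the conditional expectation over the partition factors cleanly), and second, reconciling the bounded-gradient hypothesis $\mathbb{E}\|\nabla f\|\le M$ with the second-moment bound $\mathbb{E}\|\cdot\|^2\le M^2$ that the $M^2$ terms implicitly require, either by strengthening the assumption or by absorbing the gap through Jensen's inequality.
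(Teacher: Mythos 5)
Your proposal is correct in substance but takes a genuinely different route from the paper's proof, and the difference is worth spelling out. The paper runs a \emph{first-moment} analysis: after the descent lemma it applies Cauchy--Schwarz to the inner product, reduces the bias term to $\frac{\eta L}{n}\sum_i \mathbb{E}\|\bar{x}_t - x_{i,t}\|$, and bounds this deviation by unrolling in time --- a gradient difference injected at iteration $t'$ survives to iteration $t$ only if workers $i,j$ were never shuffled into a common group in between, an event of probability $\left[\frac{n(k-1)}{k(n-1)}\right]^{t-t'}$, so the geometric series gives $\mathbb{E}\|\bar{x}_t-x_{i,t}\| \le 2\eta M \frac{nk-k}{n-k}$; Young's inequality $ab \le \frac{a^2}{2}+\frac{b^2}{2}$ then finishes. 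You instead run a \emph{second-moment} analysis: polarization in place of Cauchy--Schwarz, the consensus variance $V_t$ in place of pairwise first moments, and --- your key lemma, which has no analogue in the paper --- the exact finite-population sampling identity $\mathbb{E}[V_{t+1}\mid \hat{x}] = \frac{k-1}{n-1}\hat{V}_t$ (note your rate $\frac{k-1}{n-1}$ differs from the paper's separation probability $\frac{n(k-1)}{k(n-1)}$; both are legitimate for the respective quantities tracked), followed by a linear recursion solved in closed form. I verified your identity: it holds coordinatewise, hence for squared Euclidean norms, and the seeded pseudo-random grouping at step $t$ is indeed independent of $\hat{x}_{\cdot,t}$, so the conditional expectation factors as you hoped. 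What each approach buys: the paper's argument is shorter but its unrolling step is asserted in a single line and leans on an independence-and-factorization claim it never makes explicit, whereas your conditional-expectation identity makes that step airtight; moreover your route is tighter, giving $V_t \le \frac{4\eta^2M^2(n-1)^2}{(n-k)^2}$, which undershoots the target $\frac{4\eta^2M^2(nk-k)^2}{(n-k)^2}$ by a factor $k^2$, so you can weaken to match the stated lemma. Two small repairs you should make: first, your recursion cannot simultaneously have multiplicative slack $\rho(1+O(\eta))$ and additive term $O(\eta^2M^2)$; take the Young parameter to be a constant such as $\alpha=\frac{1-\rho}{2\rho}$, giving contraction factor $\frac{1+\rho}{2}$ and additive term $O\left(\frac{\eta^2M^2}{1-\rho}\right)$, hence steady state $O\left(\frac{\eta^2M^2}{(1-\rho)^2}\right)$ as needed; second, your closing worry about the bounded-gradient assumption is well founded and cannot be absorbed by Jensen (which points the wrong way) --- the second-moment bound $\mathbb{E}\|\nabla f\|^2\le M^2$ must simply be assumed, a strengthening the paper itself needs silently when it bounds $\frac{\eta^2 L}{2n^2}\mathbb{E}\left\|\sum_i \nabla f_i(x_i)\right\|^2$ by $\frac{\eta^2 L M^2}{2}$.
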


\begin{proof}
	Applying Descent Lemma:
	\[
	\begin{aligned}
	f(\bar{x}_{t+1}) & \leq f(\bar{x}_t) - \frac{\eta}{n} \nabla f(\bar{x}_t)^T \sum_{i=0}^{n-1} \nabla f_i(x_i; \xi_i) + \frac{\eta^2 L}{2 n^2} \left\| \sum_{i=0}^{n-1} \nabla f_i(x_i; \xi_i)\right\|^2  \\
	& = f(\bar{x}_t) - \eta \nabla f(\bar{x}_t)^T \nabla f(\bar{x}_t) + \eta \left( \nabla f(\bar{x}_t) - \frac{ \sum_{i=0}^{n-1} \nabla f_i(x_i; \xi_i)}{n}\right)^T  \nabla f(\bar{x}_t) \\
	& \quad + \frac{\eta^2 L}{2 n^2} \left\| \sum_{i=0}^{n-1} \nabla f_i(x_i; \xi_i)\right\|^2 
	\end{aligned}
	\]
	
	Taking expectation on randomness of samples and applying Cauchy-Schwarz inequality:
	\[
	\begin{aligned}
	\mathbb{E}[f(\bar{x}_{t+1})] & \leq f(\bar{x}_t) - \eta \| \nabla f(\bar{x}_t) \|^2 + \eta \mathbb{E} \left\| \nabla f(\bar{x}_t) - \frac{ \sum_{i=0}^{n-1} \nabla f_i(x_i)}{n}\right\| \|\nabla f(\bar{x}_t)\| \\
	& \quad + \frac{\eta^2 L}{2 n^2} \mathbb{E} \left\| \sum_{i=0}^{n-1} \nabla f_i(x_i)\right\|^2 \\
	& \leq f(\bar{x}_t) - \eta \| \nabla f(\bar{x}_t) \|^2 + \eta \mathbb{E} \left\| \nabla f(\bar{x}_t) - \frac{ \sum_{i=0}^{n-1} \nabla f_i(x_i)}{n}\right\| \|\nabla f(\bar{x}_t)\| \\
	& \quad + \frac{\eta^2 L M^2}{2}
	\end{aligned}
	\]
	
	For synchronous distributed update, we have $\nabla f(\bar{{x}}_t) = \frac{1}{n} \sum_{i=0}^{n-1}{\nabla f_i(\bar{x}_t)}$. The second last term can be bounded with the following inequality:
	\[
	\begin{aligned}
	\eta \mathbb{E} \left\| \nabla f(\bar{x}_t) - \frac{ \sum_{i=0}^{n-1} \nabla f_i(x_i)}{n}\right\|  &= \eta \mathbb{E}\left\| \frac{n\nabla f(\bar{{x}}_t) -  \sum_{i=0}^{n-1} \nabla f_i(x_i) }{n}  \right\| \\
	&= \eta\mathbb{E}\left\|\frac{1}{n} \sum_{i=0}^{n-1}\left[\nabla f_{i}(\bar{{x}}_t)- \nabla f_i(x_i)\right]\right\|\\
	& \leq \frac{\eta}{n} \sum_{i=0}^{n-1} \mathbb{E}\left\|\nabla f_{i}(\bar{{x}}_t)-\nabla f_{i}\left({x}_{i,t}\right)\right\|\\
	& \le \frac{\eta L}{n} \sum_{i=0}^{n-1} \mathbb{E}\left\|\bar{{x}}_t-{x}_{i,t}\right\|
	\end{aligned}
	\]
	
	The probability that two workers $i$, $j$ are in the different groups in one iteration is $\frac{n(k-1)}{k(n-1)}$ in a training cluster with $n$ workers and $k$ groups. If once worker $i$, $j$ are in the same group, the model parameters will be averaged.  Note $\bar{x}_0 = x_0$, suppose $k < n$, we have $\frac{n(k-1)}{k(n-1)} < 1$, then we have:
	\[
	\begin{aligned}
	\mathbb{E}\left\|\bar{{x}}_t-{x}_{i,t}\right\| & \le  \max_{j \in [n]}\{ \mathbb{E}\left\|x_{j,t}-{x}_{i,t} \right\|\}   \\
	& \le \eta  \max_{j \in [n]} \left\{ \sum_{t'=0}^{t} \left[\frac{n(k-1)}{k(n-1)}\right]^{t'}   \mathbb{E}\left\| \nabla_{g_{(j, t')}} - \nabla_{g_{(i, t')}} \right\| \right\}\\
	& \le 2\eta \sum_{t'=0}^{t} \left[\frac{n(k-1)}{k(n-1)}\right]^{t'}M \\
	& \le 2\eta \frac{nk-k}{n-k}M
	\end{aligned}
	\]
	The last inequality uses the sum of geometric series.
	
	So far we can further bound the above derivation. Note the inequality $ab \le \frac{a^2}{2} + \frac{b^2}{2}$ is used:
	\[
	\begin{aligned}
	\mathbb{E}[f(\bar{x}_{t+1})] & \leq f(\bar{x}_t) - \eta \|\nabla f(\bar{x}_t)\|^2 + \eta^2 \frac{2nk-2k}{n-k} LM \|\nabla f(\bar{x}_t)\| + \frac{\eta^2 L M^2}{2} \\
	& \leq f(\bar{x}_t) - \eta \|\nabla f(\bar{x}_t)\|^2 + \eta\left(\frac{2\eta^2 L^2 M^2 (nk-k)^2}{(n-k)^2} + \frac{\|\nabla f(\bar{x}_t)\|^2}{2} \right) + \frac{\eta^2 L M^2}{2} \\
	&= f(\bar{x}_t) - \frac{\eta \|\nabla f(\bar{x}_t)\|^2}{2} + \frac{2\eta^3 L^2 M^2 (nk-k)^2}{(n-k)^2} + \frac{\eta^2 L M^2}{2} 
	\end{aligned}
	\]
	
	Taking full expectation:
	\[
	\mathbb{E}[f(\bar{x}_{t+1}) - f(\bar{x}_t)] \leq  - \frac{\eta \mathbb{E}[\|\nabla f(\bar{x}_t)\|^2]}{2} +  \frac{2\eta^3 L^2 M^2 (nk-k)^2}{(n-k)^2} + \frac{\eta^2 L M^2}{2}
	\]
	
	For any $T > 0$, we have the following inequality:
	\[
	\begin{aligned}
	f(\bar{x}_0) - f^\star  & \ge \mathbb{E}[f(\bar{x}_0) - f(\bar{x}_{T})] \\
	& = \sum_{t=0}^{T-1}\mathbb{E}[f(\bar{x}_t) - f(\bar{x}_{t+1})] \\
	& \ge \frac{\eta \sum_{t=0}^{T-1} \mathbb{E}[\|\nabla f(\bar{x}_t)\|^2]}{2} -  \frac{2\eta^3 L^2 T M^2  (nk-k)^2}{(n-k)^2} - \frac{\eta^2 LT M^2}{2}
	\end{aligned}
	\]
	
	Move the term $\frac{\eta \sum_{t=0}^{T-1} \mathbb{E}[\|\nabla f(\bar{x}_t)\|^2]}{2}$ to the left, and divide both sides by $\frac{\eta T}{2}$:
	\[
	\frac{\sum_{t=0}^{T-1} \mathbb{E}[\|\nabla f(\bar{x}_t)\|^2]}{T} \leq \frac{2(f(\bar{x}_0) - f^\star)}{\eta T} +  \frac{4\eta^2 L^2 M^2(nk-k)^2}{(n-k)^2} + \eta L M^2
	\]
	Here complete the proof.
\end{proof}

\begin{theorem}
	Given a success parameter $\epsilon>0$, consider iterations for non-convex function $f$. Set a fixed learning rate $\eta = \min\left\{\frac{\epsilon}{4LM^2}, \frac{\sqrt{\epsilon}(n-k)}{4(nk-k)LM}\right\}$, we have $\frac{\sum_{t=0}^{T-1} \mathbb{E}[\|\nabla f(\bar{x}_t)\|^2]}{T} \leq \epsilon$  for every iteration $T \ge \frac{4(f(\bar{x}_0)-f^\star)}{\eta \epsilon}$.
\end{theorem}

\begin{proof}
	We bound the following inequality term by term:
	\[
	\frac{\sum_{t=0}^{T-1} \mathbb{E}[\|\nabla f(\bar{x}_t)\|^2]}{T} \leq \underbrace{ \frac{2(f(\bar{x}_0) - f^\star)}{\eta T} }_a + \underbrace { \frac{4\eta^2 L^2 M^2(nk-k)^2}{(n-k)^2} }_{b1} + \underbrace{ \eta L M^2}_{b2}
	\]
	
	For \[\eta = \min\left\{\frac{\epsilon}{4LM^2}, \frac{\sqrt{\epsilon}(n-k)}{4(nk-k)LM}\right\}\], we have $b1 \leq \epsilon /4 $, $b2 \leq \epsilon /4 $.
	
	Consider $a$, if $a \leq \epsilon / 2$, or \[T \ge \frac{4(f(\bar{x}_0)-f^\star)}{\eta \epsilon}\], we have the convergence as \[\frac{\sum_{t=0}^{T-1} \mathbb{E}[\|\nabla f(\bar{x}_t)\|^2]}{T} \leq \epsilon\].
\end{proof}

\end{document}